\documentclass[reqno,10pt,a4paper]{amsart}

\usepackage{amssymb,mathptmx,eucal,mathrsfs,array,setspace,color,geometry,enumitem,cite}
\usepackage{graphicx}
\usepackage{tikz}
\usetikzlibrary{positioning}

\DeclareSymbolFont{largesymbols}{OMX}{zplm}{m}{n} 

\setitemize{leftmargin=*}                
\setenumerate{leftmargin=*,              
              label=\textup{(\arabic*)}} 

\newcolumntype{C}{>{$}c<{$}} 

\geometry{inner=23mm, outer=23mm, top=22mm, bottom=22mm, head=10mm, foot=10mm}

\numberwithin{equation}{section}

\allowdisplaybreaks


\DeclareMathOperator{\im}{im}

\renewcommand{\ge}{\geqslant}
\renewcommand{\le}{\leqslant}

\newcommand{\alg}[1]{\mathfrak{#1}} 
\newcommand{\grp}[1]{\mathsf{#1}} 

\newcommand{\func}[2]{#1 \left( #2 \right)} 
\newcommand{\tfunc}[2]{#1 \bigl( #2 \bigr)} 

\newcommand{\brac}[1]{\left( #1 \right)}

\newcommand{\sqbrac}[1]{\left[ #1 \right]}

\newcommand{\set}[1]{\left\{ #1 \right\}}

\newcommand{\st}{\mspace{5mu} : \mspace{5mu}} 

\newcommand{\abs}[1]{\left\lvert #1 \right\rvert}

\newcommand{\ZZ}{\mathbb{Z}}

\newcommand{\RR}{\mathbb{R}}
\newcommand{\CC}{\mathbb{C}}

\newcommand{\pd}{\partial}     
\newcommand{\ahol}[1]{\overline{#1}}

\newcommand{\dd}{\mathrm{d}}   
\newcommand{\ii}{\mathfrak{i}} 
\newcommand{\ee}{\mathsf{e}}   

\newcommand{\wun}{\mathbf{1}}  

\newcommand{\normord}[1]{\mbox{${} : #1 : {}$}} 

\newcommand{\comm}[2]{\bigl[ #1 , #2 \bigr]}

\newcommand{\Vac}{\Omega}

\newcommand{\Ra}{\Rightarrow}
\newcommand{\ra}{\rightarrow}
\newcommand{\lra}{\longrightarrow}
\newcommand{\ses}[3]{0 \ra #1 \ra #2 \ra #3 \ra 0}                                  
\newcommand{\dses}[5]{0 \lra #1 \overset{#2}{\lra} #3 \overset{#4}{\lra} #5 \lra 0} 
\newcommand{\res}[4]{\cdots \lra #4 \lra #3 \lra #2 \lra #1 \lra 0}                 
\newcommand{\cores}[4]{0 \lra #1 \lra #2 \lra #3 \lra #4 \lra \cdots}               
\DeclareMathOperator{\Ext}{Ext}
\newcommand{\Extgrp}[3]{\Ext^{#1} \bigl( #2 , #3 \bigr)}

\newcommand{\categ}[1]{\mathscr{#1}}   

\newcommand{\affine}[1]{\widehat{#1}}
\newcommand{\SLG}[2]{\grp{#1} \bigl( #2 \bigr)}                             
\newcommand{\SLA}[2]{\alg{#1} \bigl( #2 \bigr)}                             
\newcommand{\AKMA}[2]{\affine{\alg{#1}} \left( #2 \right)}                  
\newcommand{\AKMSA}[3]{\affine{\alg{#1}} \left( #2 \middle\vert #3 \right)} 


\newcommand{\Fin}[1]{\overline{#1}}
\newcommand{\Ghost}{\alg{G}}                                                
\newcommand{\FinGhost}{\Fin{\Ghost}}                                        
\newcommand{\ExtGhost}[1]{\alg{E}_{#1}}                                     

\newcommand{\conjaut}{\mathsf{c}}                     
\newcommand{\sfaut}{\sigma}                           
\newcommand{\conjmod}[1]{\tfunc{\conjaut}{#1}}        
\newcommand{\sfmod}[2]{\tfunc{\sfaut^{#1}}{#2}}       

\newcommand{\VacMod}{\mathcal{V}}          
\newcommand{\Typ}[1]{\mathcal{W}_{#1}}     
\newcommand{\Stag}{\mathcal{P}}            

\newcommand{\ExtVacMod}{\mathbb{V}}          
\newcommand{\ExtTyp}[1]{\mathbb{W}_{#1}}     

\newcommand{\FinTyp}[1]{\Fin{\mathcal{W}}_{#1}}

\newcommand{\OthVac}{\mathsf{L}}
\newcommand{\OthTyp}[1]{\mathsf{E}_{#1}}
\newcommand{\OthStag}{\mathsf{S}}

\newcommand{\directint}{\ominus \mspace{-17.4mu} \int} 

\DeclareMathOperator{\tr}{tr}
\newcommand{\traceover}[1]{\tr_{\raisebox{-3pt}{$\scriptstyle #1$}}}
\newcommand{\chmap}{\mathrm{ch}}
\newcommand{\Gr}[1]{\bigl[ #1 \bigr]}           
\newcommand{\ch}[1]{\chmap \Gr{#1}}             
\newcommand{\fch}[2]{\tfunc{\ch{#1}}{#2}}       

\newcommand{\jth}[1]{\vartheta_{#1}}            
\newcommand{\fjth}[2]{\tfunc{\jth{#1}}{#2}}     

\newcommand{\modS}{\mathsf{S}}                        
\newcommand{\modT}{\mathsf{T}}                        
\newcommand{\modC}{\mathsf{C}}                        

\newcommand{\Smat}[2]{\modS \bigl[ #1 \ra #2 \bigr]}  

\newcommand{\fuse}{\mathbin{\times}}                              
\newcommand{\Grfuse}{\mathbin{\boxtimes}}                         
\newcommand{\fuscoeff}[3]{\genfrac{[}{]}{0pt}{0}{#3}{#1 \ \ #2}}  
\renewcommand{\coprod}[1]{\func{\Delta}{#1}}                        
\newcommand{\fcoprod}[2]{\func{\coprod{#1}}{#2}}

\newcommand{\othfuse}{\mathbin{\widehat{\times}}}

\newcommand{\eqnref}[1]{Equation~\eqref{#1}}
\newcommand{\eqnDref}[2]{Equations~\eqref{#1} and \eqref{#2}}
\newcommand{\secref}[1]{Section~\ref{#1}}

\newcommand{\appref}[1]{Appendix~\ref{#1}}

\newcommand{\figref}[1]{Figure~\ref{#1}}

\newcommand{\thmref}[1]{Theorem~\ref{#1}}

\newcommand{\propref}[1]{Proposition~\ref{#1}}

\newcommand{\lemref}[1]{Lemma~\ref{#1}}

\newcommand{\corref}[1]{Corollary~\ref{#1}}

\newcommand{\conjref}[1]{Conjecture~\ref{#1}}
\newcommand{\conjDref}[2]{Conjectures~\ref{#1} and \ref{#2}}

\newcommand{\cft}{conformal field theory}
\newcommand{\cfts}{conformal field theories}
\newcommand{\uea}{universal enveloping algebra}
\newcommand{\lcft}{logarithmic conformal field theory}
\newcommand{\lcfts}{logarithmic conformal field theories}
\newcommand{\WZW}{Wess-Zumino-Witten}
\newcommand{\PBW}{Poincar\'{e}-Birkhoff-Witt}
\newcommand{\NGK}{Nahm-Gaberdiel-Kausch}

\newcommand{\opes}{operator product expansions}
\newcommand{\hwv}{highest weight vector}
\newcommand{\hwvs}{highest weight vectors}
\newcommand{\lwv}{lowest weight vector}

\newcommand{\hwm}{highest weight module}
\newcommand{\hwms}{highest weight modules}
\newcommand{\voa}{vertex operator algebra}
\newcommand{\voas}{vertex operator algebras}

\newcommand{\lhs}{left-hand side}
\newcommand{\rhs}{right-hand side}

\theoremstyle{plain}
\newtheorem{thm}{Theorem}
\newtheorem{prop}[thm]{Proposition}
\newtheorem{lem}[thm]{Lemma}
\newtheorem{cor}[thm]{Corollary}
\theoremstyle{definition} 

\newtheorem{conj}{Conjecture}


\begin{document}

\title{Bosonic Ghosts at $c=2$ as a Logarithmic CFT}

\author[D Ridout]{David Ridout}

\address[David Ridout]{
Department of Theoretical Physics \\
Research School of Physics and Engineering 
and
Mathematical Sciences Institute \\
Australian National University \\
Acton, ACT 2601 \\
Australia
}

\email{david.ridout@anu.edu.au}

\author[S Wood]{Simon Wood}

\address[Simon Wood]{
Department of Theoretical Physics \\
Research School of Physics and Engineering \\
Australian National University \\
Acton, ACT 2601 \\
Australia
}

\email{simon.wood@anu.edu.au}

\dedicatory{\textup{\today}}

\begin{abstract}
Motivated by Wakimoto free field realisations, the bosonic ghost system of central charge $c=2$ is studied using a recently proposed formalism for logarithmic conformal field theories.  This formalism addresses the modular properties of the theory with the aim being to determine the (Grothendieck) fusion coefficients from a variant of the Verlinde formula.  The key insight, in the case of bosonic ghosts, is to introduce a family of parabolic Verma modules which dominate the spectrum of the theory.  The results include S-transformation formulae for characters, non-negative integer Verlinde coefficients, and a family of modular invariant partition functions.  The logarithmic nature of the corresponding ghost theories is explicitly verified using the Nahm-Gaberdiel-Kausch fusion algorithm.

\vspace{2mm}
\noindent \textsc{MSC}: 17B68, 17B69

\noindent \textsc{Keywords}: Logarithmic conformal field theory, vertex algebras, modular transformations, fusion.
\end{abstract}

\maketitle

\onehalfspacing

\section{Introduction} \label{sec:Intro}

Ghost systems have a long history in \cft{}, particularly with regard to Faddeev-Popov gauge fixing of superstrings, see \cite{FriCon86} for example, but also as ingredients for constructing more complicated theories, Wakimoto realisations of \WZW{} models \cite{WakFoc86} and quantum hamiltonian reductions \cite{FeiAff92} being notable examples.  The intrinsic appeal of ghost systems is that they are examples of free field theories.  On the other hand, these theories are strongly non-unitary and, in the case of bosonic ghost systems, the spectrum of conformal weights is well known to be unbounded below.

The fermionic ghost system with central charge $c=-2$ has received much attention under the guise of the symplectic fermions \cft{} and is known to be logarithmic \cite{GurLog93,KauCur95,KauSym00}.  The logarithmic nature of the bosonic ghost system with $c=-1$ then follows easily from its realisation as a pair of symplectic fermions coupled to a lorentzian boson \cite{LesLog04}.  The fact that the different ghost systems may all be regarded as the same theory with a different choice of energy-momentum tensor now strongly suggests that every ghost system is a \lcft{}.  \thmref{thm:FusionTT} below confirms this conclusion for the $c=2$ bosonic ghost theory.

This confirmation is not, however, the aim of this note.  Rather, we wish to illustrate how the recently proposed standard module formalism \cite{CreLog13} for \lcfts{} allows one to efficiently analyse the \mbox{$c=2$} bosonic ghost system, in particular its modular properties, fusion rules\footnote{We will denote the fusion product of \voa{} modules by $\fuse$, reserving the symbol $\otimes$, and the term ``tensor product'', for the tensor product of complex vector spaces.  While fusion is expected to have the properties of an abstract tensor product, this has only been proven under certain assumptions on the algebra and the category of modules, see \cite{HuaTen13} for example, that do not seem to be met in this paper.} and modular invariant partition functions.  Because this formalism is tailored to studying the modular properties of the theory's characters, we could have chosen any bosonic system in which the ghost fields have integer conformal weight (to facilitate the T-transformation of characters).  The choice $c=2$ is convenient and it reflects our interest in Wakimoto free field realisations.  As mentioned above, we expect that the results are broadly independent of the choice of $c$.

Of course, the $c=-1$ bosonic ghost system is already very well understood.  However, the analysis in this case proceeds by considering the $\ZZ_2$-orbifold theory that coincides with the fractional level \WZW{} model $\AKMA{sl}{2 ; \RR}_{-1/2}$.  Being non-free, one has to work fairly hard to establish the modular transformation properties \cite{CreMod12} and fusion rules \cite{RidFus10} for this affine algebra.  Here, it is very important to realise that category $\categ{O}$ is not sufficient --- the physically relevant module category is far larger.  It is, unfortunately, not clear how to determine this physically relevant category.  We can only insist that it be closed under fusion and conjugation, as well as have the property that one can construct a modular invariant partition function from the characters.  Granting these results for $\AKMA{sl}{2 ; \RR}_{-1/2}$, the fusion rules may then be lifted to the $c=-1$ bosonic ghost system using the technology of simple current extensions \cite{SchSim90} and we summarise them, for ease of comparison, in \appref{app:c=-1Fusion}.\footnote{This technology was originally developed for rational \cfts{}, but also applies, with almost no changes, to non-rational theories to which the standard module formalism applies.  This formalism is refined, and the connection to simple currents outlined, in \cite{RidVer14}.}  The analysis reported here for $c=2$ is significantly more straightforward.  Indeed, this transparency leads us to propose that the $c=2$ bosonic ghost system should be regarded as an archetypal example of a \lcft{} \cite{CreRel11}.

We start by reviewing the general bosonic ghost system in \secref{sec:Algebra} to fix conventions and notation.  As usual, particular attention is paid to the conjugation and spectral flow automorphisms of the mode algebra.  \secref{sec:Reps} then addresses the representation theory, starting with \hwms{} (but for general Borel subalgebras sharing the chosen Cartan subalgebra).  We find that there is a unique \hwm{} for each Borel subalgebra.  The more interesting case of parabolic \hwms{} is then studied, anticipating their necessity for modular transformations.\footnote{The importance of the parabolic modules appears to have been largely overlooked in previous studies \cite{LesSU202,LesLog04}, leading to incorrect conclusions concerning modularity and the inapplicability of the Verlinde formula.  Here, we are guided by \cite{RidSL210,CreMod12} where the analogous modules are found to be crucial for a complete understanding of the $\AKMA{sl}{2;\RR}_{-1/2}$ model.}  For each Borel subalgebra, we choose a certain parabolic subalgebra and find a continuum of parabolic \hwms{} parametrised by $\RR / \ZZ$.  We refer to \cite{HumRep08} for definitions and basic properties of parabolic subalgebras and modules (in the context of semisimple Lie algebras).

Characters are then computed in \secref{sec:Char}, where we quickly detail the contradictions inherent in regarding them as meromorphic functions (on the product of the Riemann sphere and an open disc), see also \cite{RidSL208}.  We would like to strongly emphasise that writing bosonic ghost characters in terms of modular forms can therefore lead to incorrect conclusions.  Proceeding instead in a distributional setting \cite{CreMod12}, we determine expressions for the characters of the parabolic modules and construct resolutions to deduce formulae for the highest weight characters.  The core of the analysis now follows in \secref{sec:Mod}.  There, we apply the modular S-transformation to the parabolic characters in \thmref{thm:SMatrix} and show that the result defines a unitary integral operator on the space spanned by the characters that is symmetric with respect to the canonical basis and squares to the conjugation map.  This turns out to involve a surprisingly non-trivial automorphy factor which is dealt with by a judicious extension of the characters and their transformation properties.  S-transformation formulae for the highest weight characters follow.

We then apply the Verlinde formula in \secref{sec:Verlinde}, showing explicitly that the resulting ``Verlinde product rules'', for decomposing products of characters, have non-negative integer coefficients (\thmref{thm:GrFusion}).  Conjecturing that these rules coincide with the image of the fusion rules in the Grothendieck ring, we effortlessly arrive at almost all of the fusion rules involving simple modules.  The remaining simple-simple fusion rules are then calculated explicitly in \secref{sec:Fusion}.  This is a technical matter utilising the \NGK{} algorithm \cite{NahQua94,GabInd96}.  The actual computations are relatively simple, but there is a conceptual problem to overcome in that the modules that we would like to fuse all have trivial ``special subspaces''.  Nevertheless, a careful analysis shows (\thmref{thm:FusionTT}) that the resulting fusion products are staggered modules in the sense of \cite{RidSta09}, proving that the $c=2$ bosonic ghost system is a \lcft{}.  We use these results to propose a candidate for the physically relevant category of ghost modules (see \conjref{conj:Proj}).  This category is demonstrated to be closed under fusion and conjugation and we construct from it an infinite series of modular invariant partition functions, as required.  We close with a brief discussion of a bulk state space that we believe corresponds to the diagonal partition function.  The appendix recalls the fusion rules of the $c=-1$ bosonic ghost system, as derived in \cite{RidFus10}, for comparison.

\section{Ghost Algebras} \label{sec:Algebra}

The bosonic ghost system is generated by two (mutually bosonic) fields $\tfunc{\beta}{z}$ and $\tfunc{\gamma}{z}$, subject to the following \opes{}:
\begin{equation} \label{OPE:Ghosts}
\func{\beta}{z} \func{\beta}{w} \sim 0, \qquad 
\func{\beta}{z} \func{\gamma}{w} \sim -\frac{1}{z-w}, \qquad 
\func{\gamma}{z} \func{\gamma}{w} \sim 0.
\end{equation}
From these fields, one constructs a unique $\AKMA{gl}{1}$ current $\tfunc{J}{z}$ as follows:
\begin{equation} \label{eq:DefJ}
\func{J}{z} = \normord{\func{\beta}{z} \func{\gamma}{z}}.
\end{equation}
This current is lorentzian and it gives $\tfunc{\beta}{z}$ a charge of $+1$, whereas $\tfunc{\gamma}{z}$ is given charge $-1$.  In contrast, the conformal structure is not unique.  The bosonic ghost system admits a one-parameter family of energy-momentum tensors $\func{T^a}{z}$ parametrised by $a$:
\begin{equation} \label{eq:DefT}
\func{T^a}{z} = \brac{a-\tfrac{1}{2}} \normord{\func{\beta}{z} \func{\pd \gamma}{z}} + \brac{a+\tfrac{1}{2}} \normord{\func{\pd \beta}{z} \func{\gamma}{z}}.
\end{equation}
Although we may take $a \in \RR$ (or $\CC$), we will restrict $a$ to be in $\tfrac{1}{2} \ZZ$ for technical reasons to be discussed shortly.  We remark that $\func{J}{z}$ is only a conformal primary when $a=0$.  The central charge and conformal weights assigned to $\tfunc{\beta}{z}$ and $\tfunc{\gamma}{z}$ are then
\begin{equation} \label{eq:hc}
c^a = 12a^2 - 1 \in \ZZ; \qquad h_{\beta}^a = \tfrac{1}{2}-a \in \tfrac{1}{2} \ZZ, \quad h_{\gamma}^a = \tfrac{1}{2}+a \in \tfrac{1}{2} \ZZ.
\end{equation}
Note that $h_{\beta}^a + h_{\gamma}^a = 1$, in accordance with \eqref{OPE:Ghosts}.

Expanding the ghost fields (in the untwisted sector) as
\begin{equation}
\tfunc{\beta}{z} = \sum_{n \in \ZZ - h_{\beta}^a} \beta_n z^{-n-h_{\beta}^a}, \qquad 
\tfunc{\gamma}{z} = \sum_{n \in \ZZ - h_{\gamma}^a} \gamma_n z^{-n-h_{\gamma}^a},
\end{equation}
the commutation relations corresponding to \eqref{OPE:Ghosts} are
\begin{equation} \label{eq:Comm}
\comm{\beta_m}{\beta_n} = 0, \qquad 
\comm{\beta_m}{\gamma_n} = -\delta_{m+n=0} \: \wun, \qquad 
\comm{\gamma_m}{\gamma_n} = 0.
\end{equation}
We will denote by $\Ghost$ the infinite-dimensional complex Lie algebra spanned by the $\beta_n$, $\gamma_n$ and the central element $\wun$, equipped with the Lie brackets \eqref{eq:Comm}.  We also identify $\wun$ with the unit of the \uea{} of $\Ghost$ and assume that it acts as the identity operator on any $\Ghost$-module.  The subspace $\CC \wun$ will be referred to as the Cartan subalgebra of $\Ghost$.

The Lie algebra $\Ghost$ admits several useful automorphisms that preserve this Cartan subalgebra.  In particular, we mention the conjugation automorphism $\conjaut$ and the spectral flow automorphisms $\sfaut^{\ell}$ which act on the generators $\beta_n$ and $\gamma_n$ as follows:
\begin{equation}
\conjmod{\beta_n} = \gamma_n, \quad \conjmod{\gamma_n} = -\beta_n; \qquad 
\sfmod{\ell}{\beta_n} = \beta_{n-\ell}, \quad \sfmod{\ell}{\gamma_n} = \gamma_{n+\ell}.
\end{equation}
Note that $\conjaut \sfaut^{\ell} = \sfaut^{-\ell} \conjaut$.  We remark that conjugation does not have order $2$ as one might expect, but instead has order $4$.\footnote{As the action of $\conjaut^2$ may be identified with that of $-\wun$, conjugation still defines an order $2$ permutation on modules, as one would expect.}  We also note that $\conjaut$ preserves the mode index so that, for example, if $\beta_n$ has $n \in \ZZ - h_{\beta}^a$, then $\gamma_n = \conjmod{\beta_n}$ has $n \in \ZZ - h_{\beta}^a = \ZZ + h_{\gamma}^a$, rather than $n \in \ZZ - h_{\gamma}^a$.  It follows that unless $h_{\beta}^a$ and $h_{\gamma}^a$ belong to $\tfrac{1}{2} \ZZ$, conjugation will not preserve the untwisted sector.\footnote{Equivalently, the conjugate of an untwisted module will be twisted in general.  There is no inconsistency here, but it does simplify matters if we assume that $h_{\beta}^a, h_{\gamma}^a \in \tfrac{1}{2} \ZZ$.  More importantly, we know of no physical applications of ghost systems with conformal weights not in $\tfrac{1}{2} \ZZ$.}  This is why we are explicitly assuming that $a \in \tfrac{1}{2} \ZZ$.  Similarly, the spectral flow automorphism $\sfaut^{\ell}$ will only preserve the untwisted sector if $\ell \in \ZZ$.

These automorphisms may be used to construct families of $\Ghost$-modules by twisting the action on any given module.  So, let $\mathcal{M}$ be a $\Ghost$-module and define new $\Ghost$-modules $\conjmod{\mathcal{M}}$ and $\sfmod{\ell}{\mathcal{M}}$ as follows.  First, we define $\conjmod{\mathcal{M}}$ and $\sfmod{\ell}{\mathcal{M}}$ as vector spaces isomorphic to $\mathcal{M}$:
\begin{equation}
\conjmod{\mathcal{M}} = \set{\conjmod{v} \st v \in \mathcal{M} \vphantom{\sfmod{\ell}{v}}}, \qquad 
\sfmod{\ell}{\mathcal{M}} = \set{\sfmod{\ell}{v} \st v \in \mathcal{M}}.
\end{equation}
Here, the symbols $\conjmod{v}$ and $\sfmod{\ell}{v}$ are formal so that the isomorphisms are given by $v \mapsto \conjmod{v}$ and $v \mapsto \sfmod{\ell}{v}$, respectively.  These vector spaces are then equipped with the following $\Ghost$-action:
\begin{equation}
\alpha \cdot \conjmod{v} = \conjmod{\tfunc{\conjaut^{-1}}{\alpha} v}, \quad 
\alpha \cdot \sfmod{\ell}{v} = \sfmod{\ell}{\sfmod{-\ell}{\alpha} v}, \qquad 
\text{for all \(\alpha \in \Ghost\).}
\end{equation}
We will refer to the module $\conjmod{\mathcal{M}}$ as the conjugate of $\mathcal{M}$ and to the $\sfmod{\ell}{\mathcal{M}}$ as the spectral flow images of $\mathcal{M}$.

Of course, one can always identify the vector space underlying $\mathcal{M}$ with that of $\conjmod{\mathcal{M}}$ and the $\sfmod{\ell}{\mathcal{M}}$, instead of making the isomorphism explicit.  Then, one needs to distinguish the $\Ghost$-action, for example by making the representation explicit.  In particular, if $\rho$ denotes the representation of $\Ghost$ on the vector space $\mathcal{M}$, then the conjugate and spectrally flowed representations are defined by
\begin{equation}
\tfunc{\rho_{\conjaut}}{\alpha} = \tfunc{\rho}{\tfunc{\conjaut^{-1}}{\alpha}}, \qquad \tfunc{\rho_{\ell}}{\alpha} = \tfunc{\rho}{\sfmod{-\ell}{\alpha}}.
\end{equation}
As we prefer the language of modules over representations, we will keep the vector space isomorphisms explicit.  It is not hard to translate between the two languages if desired.

The induced action of the conjugation and spectral flow automorphisms on the current and Virasoro modes is most easily computed by lifting the automorphisms to the level of fields.  The results are:
\begin{equation}
\begin{aligned}
\conjmod{J_n} &= -J_n - 2a \delta_{n=0} \: \wun, \\
\conjmod{L_n^a} &= L_n^a + 2a n J_n,
\end{aligned}
\qquad
\begin{aligned}
\sfmod{\ell}{J_n} &= J_n + \ell \delta_{n=0} \: \wun, \\
\sfmod{\ell}{L_n^a} &= L_n^a - \ell J_n - \ell \brac{a + \tfrac{1}{2} \ell} \delta_{n=0} \: \wun.
\end{aligned}
\end{equation}
We note that the charge and conformal weight of a weight vector $v \in \mathcal{M}$ change as follows upon conjugating or applying spectral flow:
\begin{equation} \label{eq:FlowedWeights}
\begin{aligned}
J_0 v &= j v, \\
L_0^a v &= h v
\end{aligned}
\qquad \Ra \qquad 
\begin{aligned}
J_0 \conjmod{v} &= -\brac{j+2a} \conjmod{v}, \\
L_0^a \conjmod{v} &= h \conjmod{v},
\end{aligned}
\qquad 
\begin{aligned}
J_0 \sfmod{\ell}{v} &= \brac{j-\ell} \sfmod{\ell}{v}, \\
L_0^a \sfmod{\ell}{v} &= \sqbrac{h + \ell j + \ell \brac{a - \tfrac{1}{2} \ell}} \sfmod{\ell}{v}.
\end{aligned}
\end{equation}
The fact that the $\sigma^{\ell}$ do not preserve $L_0^a$, hence the conformal weights, is the origin of the name ``spectral flow''.

We close this section by remarking that the elements of the one-parameter family $\func{T^a}{z}$ may be viewed as deformations of the element with $a=0$.  Indeed,
\begin{equation} \label{eq:TDef}
\func{T^a}{z} = \func{T^0}{z} + a \func{\partial J}{z}.
\end{equation}
It follows that each bosonic ghost system shares the same representation content, independent of $a$.\footnote{In fact, the evidence at hand not only suggests that the modules over the ghost \voas{} form equivalent abelian categories, but that the equivalence extends to tensor categories.  In other words, the fusion rules of the ghost theories are also identical.  We hope to make this more precise in the future.}  We may therefore choose a convenient representative to study in detail.  As mentioned above, the theory with $a=0$ and $c^0=-1$ received a full treatment in \cite{RidSL208,RidFus10,CreMod12}; however, the results were derived as a consequence of those for the fractional level model $\AKMA{sl}{2 ; \RR}_{-1/2}$, essentially because $h_{\beta}^0 = h_{\gamma}^0 = \tfrac{1}{2}$ leads to twisted modules whose characters are not eigenvectors for the modular T-transformation.  In what follows, we will instead specialise to $a=-\tfrac{1}{2}$ and set $h_{\beta} = 1$, $h_{\gamma} = 0$ and $c=2$, dropping the label $a$ from all quantities for simplicity.  This choice facilitates a direct investigation of the spectrum, modular properties of the characters, and fusion rules.  It also has the advantage of being of significant mathematical and physical interest through the Wakimoto free field realisations of affine Kac-Moody algebras \cite{WakFoc86}.

\noindent 

\section{Representation Theory} \label{sec:Reps}

As mentioned above, we will now choose the conformal structure, once and for all, so that $a=-\tfrac{1}{2}$, $h_{\beta} = 1$, $h_{\gamma} = 0$ and $c=2$.  The ghost algebra $\Ghost$ is not a (generalised) Kac-Moody algebra, but it does admit triangular decompositions with Cartan subalgebra $\CC \wun$.  In particular, we introduce a family of triangular decompositions, parametrised by $\ell \in \ZZ$, wherein the positive subalgebra is spanned by the $\beta_{n-\ell}$ and the $\gamma_{n+\ell+1}$, with $n \ge 0$.  These decompositions are clearly mapped into one another by the spectral flow automorphisms (and conjugation).  We will refer to the triangular decomposition with $\ell=0$ as the \emph{normal} decomposition.

Given a triangular decomposition, we may construct Verma modules.  As the Cartan subalgebra is spanned by the central element $\wun$, which we assume always acts as the identity operator, there is a unique Verma module for each decomposition.  We shall denote this Verma module by $\VacMod$ in the case that the decomposition is the normal one.  This module is generated by a state $\Vac$ which is annihilated by the $\beta_n$ and $\gamma_{n+1}$, for $n \ge 0$, hence it is annihilated by $J_0$, $L_0$ and $L_{-1}$.  We may therefore take $\Vac$ to be the (translation-invariant) vacuum of the bosonic ghost theory.  The vacuum Verma module $\VacMod$ is simple because any vector annihilated by the positive modes is also annihilated by $L_0$, so has conformal weight $0$, and the vectors $\gamma_0^n \Vac \in \VacMod$ of conformal weight $0$ are easily checked to be cyclic.  As is well known, this vacuum module admits the structure of a \voa{}.

The Verma modules obtained from the other triangular decompositions are then precisely the spectral flow images of the vacuum Verma module $\VacMod$.  We remark that the vector $\omega = \conjmod{\Vac} \in \conjmod{\VacMod}$ has charge $1$ and conformal weight $0$, by \eqref{eq:FlowedWeights}, so the module it generates is not isomorphic to $\VacMod$.  The ghost vacuum module is therefore \emph{not} self-conjugate.  Indeed, it is easy to check that the $\sfmod{\ell}{\VacMod}$ are all mutually non-isomorphic and that $\conjmod{\VacMod} \cong \sfmod{-1}{\VacMod}$.  As $\sfaut$ is an automorphism of $\Ghost$, all the Verma modules $\sfmod{\ell}{\VacMod}$ are simple.

In categorical terms, the vacuum module $\VacMod$ is the only simple object in the analogue of category $\categ{O}$.  There are also the twisted versions $\categ{O}^{\ell}$ whose unique simple object is $\sfmod{\ell}{\VacMod}$.  As $\sfaut$ is an automorphism of $\Ghost$, spectral flow defines exact functors between the $\categ{O}^{\ell}$.  Each of these categories is semisimple because $\VacMod$ admits no non-split self-extension on which the Cartan element $\wun$ acts as the identity operator.  For in such an extension, $\ses{\VacMod}{\mathcal{W}}{\VacMod}$, any $\Vac' \in \mathcal{W}$ projecting onto the \hwv{} $\Vac$ of the quotient $\VacMod$ would be cyclic, so there would exist $U$ in the \uea{} of $\Ghost$ for which $U \Vac' = \Vac$, the \hwv{} of the submodule $\VacMod$.  As $U \Vac = 0$ and $\VacMod$ is a Verma module, $U$ is a sum of terms of the form $U' \beta_n$ or $U'' \gamma_{n+1}$, where $n \ge 0$.  But, $\beta_n \Vac' = 0$ and $\gamma_{n+1} \Vac' = 0$, for all such $n$, because $\VacMod$ has no non-zero vectors of $(J_0, L_0)$-eigenvalues $(1,-n)$ and $(-1,-n-1)$, respectively.

However, the representation theory of the ghost \voa{} is not limited to Verma modules and twisted versions of category $\categ{O}$.  One can also consider parabolic Verma modules; indeed, we shall see in \secref{sec:Mod} that we must.  Recall that a subalgebra of a Lie algebra with triangular decomposition $\alg{g} = \alg{g}_- \oplus \alg{g}_0 \oplus \alg{g}_+$ is said to be parabolic if it contains the Borel subalgebra $\alg{g}_0 \oplus \alg{g}_+$.  Given the normal triangular decomposition, say, there turn out to be infinitely many parabolic subalgebras because we may extend the normal Borel subalgebra by any combination of the negative $\beta_n$ modes and the non-positive $\gamma_n$ modes.  Parabolic subalgebras containing the other Borel subalgebras may then be obtained through spectral flow (and conjugation).

This plethora of parabolic subalgebras turns out to be surplus to our needs.  For the analysis to follow, we will only require one of the parabolic subalgebras extending the normal Borel subalgebra, as well as its cousins obtained by applying spectral flow.  The reason for ignoring the remaining parabolic subalgebras, and their associated parabolic Verma modules, will not be detailed here.  Suffice to say, the point is that we want these structures to be compatible with the entire mode algebra of the ghost \voa{}, not just $\Ghost$.  The normal parabolic subalgebra that we require corresponds to extending the normal Borel subalgebra by $\beta_0$.  It is therefore spanned by the $\beta_n$ and $\gamma_n$, with $n \ge 0$, and $\wun$; we will denote it by $\alg{p}$.  For this choice, the parabolic Verma modules (also known as generalised Verma modules) are obtained by inducing any module over the subalgebra $\FinGhost$ spanned by $\beta_0$, $\gamma_0$ and $\wun$, this module being lifted to a module over $\alg{p}$ by letting the modes with positive index act as zero.

We therefore study the $\FinGhost$-modules that are the direct sum of their eigenspaces under $\Fin{J}_0 = \gamma_0 \beta_0$, these being the obvious candidates for weight modules over $\FinGhost$:
\begin{prop} \label{prop:FinGhReps}
\textcolor{white}{x}
\begin{enumerate}
\item The only highest weight $\FinGhost$-module is $\Fin{\VacMod} = \CC[\gamma_0] \Fin{\Vac}$, generated by a \hwv{} $\Fin{\Vac}$ satisfying $\beta_0 \Fin{\Vac} = 0$, hence $\Fin{J}_0 \Fin{\Vac} = 0$.  This module is simple. 
\item The only lowest weight $\FinGhost$-module is $\Fin{\conjmod{\VacMod}} = \CC[\beta_0] \Fin{\omega}$, generated by a \lwv{} $\Fin{\omega}$ satisfying $\gamma_0 \Fin{\omega} = 0$, hence $\Fin{J}_0 \Fin{\omega} = \Fin{\omega}$.  This module is simple.
\item \label{it:ParaClass} There is, in addition, a continuous family of $\FinGhost$-modules parametrised by $[\lambda] \in \CC / \ZZ$.  They have a basis consisting of vectors $\Fin{u}_j$, with $j \in [\lambda] = \ZZ + \lambda$, satisfying $\Fin{J}_0 \Fin{u}_j = j \Fin{u}_j$.  
\begin{enumerate}[label=\textup{(\alph*)}]
\item When $[\lambda] \neq [0]$, these modules are simple and are denoted by $\FinTyp{\lambda}$.  We may realise $\FinTyp{\lambda}$ on $\CC[\beta_0] \Fin{u}_{\lambda} \oplus \CC[\gamma_0] \gamma_0 \Fin{u}_{\lambda}$, noting that $\FinTyp{\lambda} = \FinTyp{\mu}$ when $\lambda - \mu \in \ZZ$. 
\item When $[\lambda] = [0]$, there are two inequivalent indecomposable modules, $\FinTyp{0}^+$ and $\FinTyp{0}^-$, whose isomorphism classes are determined by the following short exact sequences ($\Extgrp{1}{\Fin{\conjmod{\VacMod}}}{\Fin{\VacMod}} = \Extgrp{1}{\Fin{\VacMod}}{\Fin{\conjmod{\VacMod}}} = \CC$):
\begin{equation}
\dses{\Fin{\VacMod}}{}{\FinTyp{0}^+}{}{\Fin{\conjmod{\VacMod}}}, \qquad 
\dses{\Fin{\conjmod{\VacMod}}}{}{\FinTyp{0}^-}{}{\Fin{\VacMod}}.
\end{equation}
Both may be realised on the space $\CC[\gamma_0] \Fin{u}_0 \oplus \CC[\beta_0] \Fin{u}_1$, where $\beta_0 \Fin{u}_0 = 0$ and $\gamma_0 \Fin{u}_1 = a^+ \Fin{u}_0$, for $\FinTyp{0}^+$, and $\beta_0 \Fin{u}_0 = a^- \Fin{u}_1$ and $\gamma_0 \Fin{u}_1 = 0$, for $\FinTyp{0}^-$.  We may normalise the basis vectors so that $a^+ = a^- = 1$.
\end{enumerate}
\end{enumerate}
\end{prop}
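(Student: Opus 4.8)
The plan is to exploit that $U(\FinGhost)$ is, since $\wun$ is its unit, nothing but the Weyl algebra $\CC\langle \beta_0, \gamma_0 \rangle$ with $\beta_0 \gamma_0 - \gamma_0 \beta_0 = -\wun$, whose \PBW{} basis consists of the monomials $\gamma_0^m \beta_0^n$. Because $\comm{\Fin{J}_0}{\beta_0} = \beta_0$ and $\comm{\Fin{J}_0}{\gamma_0} = -\gamma_0$, on a module $M = \bigoplus_j M_j$ that is the direct sum of its $\Fin{J}_0$-eigenspaces the generator $\beta_0$ maps $M_j$ into $M_{j+1}$ and $\gamma_0$ maps $M_j$ into $M_{j-1}$; moreover any submodule, being $\Fin{J}_0$-stable, is again such a weight module, so it suffices to classify the indecomposable $M$ whose support lies in a single coset $[\lambda] \in \CC/\ZZ$. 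The one observation that governs the whole proof is that the composites $\gamma_0 \beta_0 = \Fin{J}_0$ and $\beta_0 \gamma_0 = \Fin{J}_0 - \wun$ act on $M_j$ and on $M_{j+1}$ respectively as the scalar $j$; hence $\beta_0 \colon M_j \to M_{j+1}$ and $\gamma_0 \colon M_{j+1} \to M_j$ are mutually inverse up to the scalar $j$ and are isomorphisms whenever $j \neq 0$. In particular $\dim M_j$ is constant on $\set{j \le 0}$ and on $\set{j \ge 1}$ within any coset, and is constant throughout a coset avoiding $0$.

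For (1) and (2), \PBW{} shows that the module generated by a vector $\Fin{\Vac}$ with $\beta_0 \Fin{\Vac} = 0$ is exactly $\CC[\gamma_0] \Fin{\Vac}$, with one-dimensional weight spaces of weights $0, -1, -2, \dotsc$, and dually for $\Fin{\omega}$. Since $\beta_0^n \gamma_0^n \Fin{\Vac} = (-1)^n n! \, \Fin{\Vac}$ is a nonzero multiple of $\Fin{\Vac}$, any nonzero submodule contains the cyclic vector, so both modules are simple and are therefore the unique highest- and lowest-weight modules; this identifies $\Fin{\VacMod}$ and $\Fin{\conjmod{\VacMod}}$ together with their supports $\set{j \le 0}$ and $\set{j \ge 1}$.

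For (3) I split according to the coset. If $[\lambda] \neq [0]$ no weight equals $0$, so $\beta_0$ and $\gamma_0$ are isomorphisms between all consecutive weight spaces; thus every $M_j$ is nonzero of the common dimension, which is $1$ for an indecomposable with one-dimensional weight spaces. Fixing $\Fin{u}_\lambda \neq 0$ and setting $\Fin{u}_{\lambda + n} = \beta_0^n \Fin{u}_\lambda$ and $\Fin{u}_{\lambda - n}$ proportional to $\gamma_0^n \Fin{u}_\lambda$, the relation $\gamma_0 \beta_0 \Fin{u}_j = j \Fin{u}_j$ determines the action after the obvious rescalings, producing one module $\FinTyp{\lambda}$, visibly depending only on $[\lambda]$ and carried by $\CC[\beta_0] \Fin{u}_\lambda \oplus \CC[\gamma_0] \gamma_0 \Fin{u}_\lambda$; it is simple because any nonzero weight submodule contains a full $M_j$ and the invertibility of $\beta_0, \gamma_0$ then forces it to be everything. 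If $[\lambda] = [0]$, the isomorphisms make $\dim M_j$ equal some $d_- \in \set{0,1}$ for $j \le 0$ and some $d_+ \in \set{0,1}$ for $j \ge 1$; the cases $(d_-, d_+) = (1,0)$, $(0,1)$, $(0,0)$ reproduce $\Fin{\VacMod}$, $\Fin{\conjmod{\VacMod}}$ and the zero module. In the case $(1,1)$ the module is determined by the two maps $\beta_0 \colon M_0 \to M_1$ and $\gamma_0 \colon M_1 \to M_0$ of one-dimensional spaces; their composite $\gamma_0 \beta_0$ vanishes on $M_0$, so they cannot both be isomorphisms, and if both vanish then $\bigoplus_{j \le 0} M_j$ and $\bigoplus_{j \ge 1} M_j$ are complementary submodules, contradicting indecomposability. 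Hence exactly one of the two is nonzero: $\gamma_0|_{M_1} \neq 0$ makes $\bigoplus_{j \le 0} M_j$ a submodule isomorphic to $\Fin{\VacMod}$ with quotient $\Fin{\conjmod{\VacMod}}$, which is $\FinTyp{0}^+$, while $\beta_0|_{M_0} \neq 0$ gives the opposite extension $\FinTyp{0}^-$; rescaling the nonzero map normalises $a^+ = a^- = 1$. The two are non-isomorphic since their unique proper nonzero submodules are $\Fin{\VacMod}$ and $\Fin{\conjmod{\VacMod}}$ respectively, and the same analysis applied to an arbitrary extension of $\Fin{\conjmod{\VacMod}}$ by $\Fin{\VacMod}$, or vice versa, shows it is split or one of these up to rescaling, so the corresponding $\Ext^1$ groups are one-dimensional.

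I expect the delicate point to be the completeness of the $[\lambda] = [0]$ case analysis: one must be certain that the behaviour of a weight module near weights $0$ and $1$ pins it down — which is exactly what the invertibility of $\beta_0, \gamma_0$ away from weight $0$ delivers — and that no further indecomposable is hidden among weight modules having some weight space of dimension at least $2$, the latter splitting off a copy of a module already found by choosing a complement to the submodule inside that weight space. Everything else is routine computation with the \PBW{} basis.
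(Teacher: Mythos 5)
Your proof is correct, and it reaches the classification by a somewhat different mechanism than the paper's sketch. The paper obtains the one-dimensionality of the weight spaces of a \emph{simple} weight module by viewing each weight space as a $\CC[\Fin{J}_0]$-module and noting, via \PBW{}, that a proper non-zero subspace of a weight space generates a proper non-zero submodule; the normalisations $\gamma_0 \Fin{u}_j = \Fin{u}_{j-1}$, $\beta_0 \Fin{u}_j = j \Fin{u}_{j+1}$ then pin down the structure. You instead organise everything around the observation that $\gamma_0 \beta_0 = \Fin{J}_0$ and $\beta_0 \gamma_0 = \Fin{J}_0 - \wun$ both act as the scalar $j$ on the pair $M_j$, $M_{j+1}$, so that $\beta_0$ and $\gamma_0$ are mutually inverse up to $j$ and are isomorphisms away from weight $0$. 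This buys a little more than the paper's route: it shows at once that the block of weight modules supported on $[\lambda] \neq [0]$ is equivalent to the category of vector spaces (hence semisimple with unique simple $\FinTyp{\lambda}$), and that the block at $[0]$ is controlled by the two maps $M_0 \rightleftarrows M_1$ with vanishing composites, from which your four-case analysis and the identification of the extension groups follow cleanly.

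The one place you assert rather than prove is the closing claim that a weight module on $[0]$ with a weight space of dimension at least $2$ splits off indecomposables already on the list. This is true --- choose complements of $\im \gamma_0$ inside $\ker \beta_0 \subseteq M_0$ and of $\ker \beta_0$ inside $M_0$, and dually in $M_1$ --- but it is worth noting that for the proposition as stated (which concerns modules carried by the prescribed bases, together with the completeness of the list of simples) the weaker fact that a single vector of $M_j$ generates a submodule meeting $M_j$ in a line, which is exactly the paper's $\CC[\Fin{J}_0]$ argument and also follows from your scalar-action observation, already suffices. This is no worse a level of detail than the paper's own sketch.
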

\noindent This classification is well known because $\FinGhost$ is the Weyl algebra $A_1$, also known as the canonical commutation relations algebra.  Indeed, Block classified \emph{all} simple modules over $A_1$ in \cite{BloIrr79}.  However, the proof for simple weight modules is quite easy, see \cite[Sec.~3.4]{MazLec10} for a similar proof for $\SLA{sl}{2}$, so we present a sketch for completeness.
\begin{proof}[Proof (sketch).]
As the Cartan subalgebra is spanned by $\wun$, which always acts as the identity, there is a unique Verma module and it is easy to verify that it is simple.  This takes care of (1).  (2) now follows by applying conjugation.

For (3), we need to know that a simple weight $\FinGhost$-module has one-dimensional weight spaces.  This follows by considering each weight space as a module over $\CC[\Fin{J}_0]$ and showing that these modules are simple.  The argument is by contradiction and uses only the \PBW{} theorem:  If a weight space has a proper non-zero $\CC[\Fin{J}_0]$-submodule, then it generates a proper non-zero $\FinGhost$-module.  (3a) now follows because we may normalise the weight vectors $\Fin{u}_j \in \FinTyp{\lambda}$ so that $\gamma_0 \Fin{u}_j = \Fin{u}_{j-1}$ and then, $\Fin{J}_0 \Fin{u}_j = j \Fin{u}_j$ implies that $\beta_0 \Fin{u}_j = j \Fin{u}_{j+1}$.  The existence of the $\FinTyp{\lambda}$ follows from their explicit construction.  (3b) likewise follows, with the extension groups being essentially parametrised by the coefficients $a^{\pm}$.
\end{proof}

\noindent We remark that because we do not seem to need complex weights in physical theories, we will throughout restrict the parameter $[\lambda]$ appearing in item \ref{it:ParaClass} above (and elsewhere) to lie in $\RR / \ZZ$.

Inducing $\Fin{\VacMod}$ and $\Fin{\conjmod{\VacMod}}$ recovers the usual Verma modules $\VacMod$ and $\conjmod{\VacMod}$, respectively, over $\Ghost$.  However, inducing the $\FinTyp{\lambda}$ and $\FinTyp{0}^{\pm}$ results in new parabolic Verma modules that we shall denote by $\Typ{\lambda}$ and $\Typ{0}^{\pm}$, respectively.  These may also be regarded as examples of relaxed \hwms{} in the spirit of \cite{FeiEqu98}.  It follows from \propref{prop:FinGhReps} that these new modules are simple for $\lambda \notin \ZZ$ and are otherwise characterised by the exact sequences
\begin{equation} \label{es:Atyp}
\dses{\VacMod}{}{\Typ{0}^+}{}{\conjmod{\VacMod}}, \qquad 
\dses{\conjmod{\VacMod}}{}{\Typ{0}^-}{}{\VacMod}.
\end{equation}
We also have $\Typ{\lambda} = \Typ{\mu}$ whenever $\lambda - \mu \in \ZZ$.  Twisting by spectral flow now realises the parabolic Verma modules, $\sfmod{\ell}{\Typ{\lambda}}$, $\sfmod{\ell}{\Typ{0}^+}$ and $\sfmod{\ell}{\Typ{0}^-}$, that correspond to other parabolic subalgebras of $\Ghost$.  These parabolic Verma modules are all mutually non-isomorphic.

The category $\categ{O}$ is therefore a full subcategory of the category $\categ{P}$ of parabolic \hwms{} corresponding to the parabolic subalgebra $\alg{p}$.  An analogous statement holds for the categories obtained by twisting by $\sfaut^{\ell}$.  Note that $\categ{P}$ has an uncountable family $\Typ{\lambda}$, $[\lambda] \in \CC / \ZZ$, $[\lambda] \neq [0]$, of inequivalent simple objects, as well as $\VacMod$ and $\conjmod{\VacMod}$.  This category is not semisimple because of \eqref{es:Atyp}, but the only non-semisimple block corresponds to $[\lambda] = [0]$.  However, we shall see in \secref{sec:Mod} that the physically relevant category must include not only $\categ{P}$, but also each of its spectrally-flowed versions, in order that the ghost characters span a representation of the modular group $\SLG{SL}{2;\ZZ}$.  We will also see in \secref{sec:Fusion} that closure under fusion leads to extensions between parabolic modules with different spectral flow indices.

To summarise (without categories), and to make contact with the standard module formalism of \cite{CreLog13,RidVer14}, we have constructed a continuous family of simple $\Ghost$-modules $\sfmod{\ell}{\Typ{\lambda}}$, parametrised by $[\lambda] \in \RR / \ZZ$, $[\lambda] \neq [0]$, and \mbox{$\ell \in \ZZ$}.  These parabolic Verma modules are the \emph{typical} modules.  The module conjugate to $\sfmod{\ell}{\Typ{\lambda}}$ is $\conjmod{\sfmod{\ell}{\Typ{\lambda}}} = \sfmod{-\ell}{\Typ{-\lambda}}$.  There are, moreover, two discrete families of indecomposable, but reducible, $\Ghost$-modules, $\sfmod{\ell}{\Typ{0}^+}$ and $\sfmod{\ell}{\Typ{0}^-}$, with simple composition factors $\sfmod{\ell}{\VacMod}$ and $\sfmod{\ell}{\conjmod{\VacMod}}$.  These modules are all \emph{atypical} and are also related by conjugation:  $\conjmod{\sfmod{\ell}{\Typ{0}^+}} = \sfmod{-\ell}{\Typ{0}^-}$ and $\conjmod{\sfmod{\ell}{\VacMod}} = \sfmod{-\ell-1}{\VacMod}$.  As the vacuum module $\VacMod$ is atypical, we expect that ghost theories will all be logarithmic.  The \emph{standard} modules of the theory are the typicals $\sfmod{\ell}{\Typ{\lambda}}$ and the indecomposable atypicals $\sfmod{\ell}{\Typ{0}^+}$ and $\sfmod{\ell}{\Typ{0}^-}$.  As we shall see, there is a uniform character formula for the standard modules and the corresponding modular S-transformations are straightforward to determine.

\section{Characters} \label{sec:Char}

Being a Verma module, the character of the vacuum module $\VacMod$ is easily found:
\begin{equation} \label{ch:Vac'}
\fch{\VacMod}{z;q} = \traceover{\VacMod} z^{J_0} q^{L_0 - c/24} \overset{!}{=} \frac{q^{-1/12}}{\prod_{i=1}^{\infty} \brac{1-zq^i} \brac{1-z^{-1} q^{i-1}}} = -\ii z^{1/2} \frac{\func{\eta}{q}}{\fjth{1}{z;q}}.
\end{equation}
Here, the ``$\overset{!}{=}$'' indicates that we are (temporarily) ignoring convergence regions by identifying the characters, which are formal power series, with their meromorphic continuations to $z \in \CC \cup \set{\infty}$ and $\abs{q} < 1$.  The character of the conjugate module $\conjmod{\VacMod}$ is similarly determined to be
\begin{equation} \label{ch:A1'}
\fch{\conjmod{\VacMod}}{z;q} \overset{!}{=} \frac{zq^{-1/12}}{\prod_{i=1}^{\infty} \brac{1-z^{-1}q^i} \brac{1-zq^{i-1}}} = -\ii z^{1/2} \frac{\func{\eta}{q}}{\fjth{1}{z^{-1};q}} = +\ii z^{1/2} \frac{\func{\eta}{q}}{\fjth{1}{z;q}}.
\end{equation}
It is not hard to check that these formulae are consistent with the identification $\conjmod{\VacMod} \cong \sfmod{-1}{\VacMod}$ using the properties of Jacobi theta functions and the relations
\begin{equation} \label{eq:TransformChars}
\fch{\conjmod{\mathcal{M}}}{z;q} = z \: \fch{\mathcal{M}}{z^{-1};q}, \qquad 
\fch{\sfmod{\ell}{\mathcal{M}}}{z;q} = z^{-\ell} q^{-\ell \brac{\ell+1} / 2} \: \fch{\mathcal{M}}{zq^{\ell};q},
\end{equation}
valid for any $\Ghost$-module $\mathcal{M}$.  However, they do lead to the suspicious identity of meromorphically-continued characters
\begin{equation} \label{eq:CharsCancel}
\ch{\VacMod} + \ch{\conjmod{\VacMod}} \overset{!}{=} 0
\end{equation}
which, when combined with the exact sequences \eqref{es:Atyp}, seems to say that the characters of the indecomposables $\Typ{0}^+$ and $\Typ{0}^-$ must vanish identically.

This erroneous conclusion is corrected \cite{RidSL208} by considering the difference between regarding characters as formal power series and regarding them as meromorphic functions.  The Dedekind eta and Jacobi theta functions converge for $\abs{q} < 1$, but the character formula \eqref{ch:Vac'} has poles whenever $z=q^i$, for some $i \in \ZZ$.  Thus, the character as a formal power series will only converge, upon interpreting $z$ and $q$ as complex numbers, to the given meromorphic function on one of the annuli in which the magnitude of $z$ is bounded between the magnitudes of two consecutive poles.  Indeed, the region of convergence of the vacuum character \eqref{ch:Vac'} is
\begin{equation}
\abs{q} < 1, \qquad 1 < \abs{z} < \abs{q}^{-1}.
\end{equation}
In general, the character of $\sfmod{\ell}{\VacMod}$ is only convergent in the region
\begin{equation}
\abs{q} < 1, \qquad \abs{q}^{-\ell} < \abs{z} < \abs{q}^{-\ell-1}.
\end{equation}
The regions of convergence of $\ch{\VacMod}$ and $\ch{\conjmod{\VacMod}} = \ch{\sfmod{-1}{\VacMod}}$ are therefore disjoint, so that while \eqref{eq:CharsCancel} may hold at the level of meromorphic functions, it makes no sense at the level of the characters (which are formal power series) themselves.  We therefore conclude that it is incorrect to treat characters as meromorphic functions in this case.\footnote{We also mention that it does not seem possible to instead consider characters as meromorphic functions with a given region of convergence.  One conceptual objection to this is that the modular S-transformation does not respect these convergence regions in any way, so it is not clear that characters with convergence regions may be subjected to modular analysis.}

Instead, we shall treat these formal power series as distributions over Laurent polynomials in $q$ and $z$.  This is suggested by the character formula for the typical modules $\Typ{\lambda}$ which obviously diverges everywhere if one tries to interpret it as a meromorphic function:
\begin{equation} \label{ch:T}
\ch{\Typ{\lambda}} = \sum_{n \in \ZZ} z^{n+\lambda} \frac{q^{-1/12}}{\prod_{i=1}^{\infty} \brac{1-zq^i} \brac{1-z^{-1}q^i}} = \sum_{n \in \ZZ} z^{n+\lambda} \frac{q^{-1/12}}{\prod_{i=1}^{\infty} \brac{1-q^i}^2} = \frac{z^{\lambda}}{\func{\eta}{q}^2} \sum_{n \in \ZZ} z^n.
\end{equation}
Here, we remark that the denominators in expressions such as these should be regarded as shorthand notation for the corresponding (geometric) power series.  This formula follows from the fact that a basis for the parabolic Verma module $\Typ{\lambda}$ may be chosen to consist of the parabolic \hwvs{} $u_j$, $j \in \ZZ + \lambda$, being acted upon freely by the negative modes $\beta_n$ and $\gamma_n$, $n<0$.  We have also noted that
\begin{equation}
\sum_{n \in \ZZ} \frac{z^n}{1-zq^i} = \sum_{n \in \ZZ} \sum_{k=0}^{\infty} z^{n+k} q^{ik} = \sum_{m \in \ZZ} \sum_{k=0}^{\infty} z^m q^{ik} = \sum_{m \in \ZZ} \frac{z^m}{1-q^i}.
\end{equation}

As an identity of formal power series (distributions), \eqref{ch:T} also holds for the atypical standards $\Typ{0}^+$ and $\Typ{0}^-$ upon substituting $\lambda = 0$.\footnote{We will often drop the label ``$\pm$'' when considering the characters of the atypical standard modules.}  Setting $z = \ee^{2 \pi \ii \zeta}$ now results in the divergent sum in \eqref{ch:T} being recognised as a singular distribution supported at $\zeta \in \ZZ$, that is $z=1$:
\begin{equation}
\sum_{n \in \ZZ} z^n = \sum_{n \in \ZZ} \ee^{2 \pi \ii n \zeta} = \sum_{m \in \ZZ} \func{\delta}{\zeta = m}.
\end{equation}
\eqnref{eq:CharsCancel} is therefore replaced, in this distributional setting, by
\begin{equation}
\ch{\VacMod} + \ch{\conjmod{\VacMod}} = \ch{\Typ{0}} = \frac{\sum_{m \in \ZZ} \func{\delta}{\zeta = m}}{\func{\eta}{q}^2},
\end{equation}
demonstrating that the \rhs{} is not $0$, but is rather a singular distribution supported at $z=1$.  We remark that $z=1$ is precisely the pole that separates the annuli of convergence of the characters on the \lhs{}.

Applying spectral flow then gives, using \eqnref{eq:TransformChars}, the characters of all the standard modules as distributions.  It therefore remains to compute the character of the vacuum module $\VacMod$, and its spectral flow images, as distributions rather than as meromorphic functions.  This is achieved by splicing the exact sequences \eqref{es:Atyp} with their spectrally-flowed counterparts to obtain resolutions
\begin{subequations}
\begin{equation}
\begin{aligned}
\res{\VacMod}{\sfmod{}{\Typ{0}^+}}{\sfmod{2}{\Typ{0}^+}}{\sfmod{3}{\Typ{0}^+}}&, \\
\res{\VacMod}{\Typ{0}^-}{\sfmod{-1}{\Typ{0}^-}}{\sfmod{-2}{\Typ{0}^-}} &
\end{aligned}
\end{equation}
or coresolutions
\begin{equation}
\begin{aligned}
&\cores{\VacMod}{\sfmod{}{\Typ{0}^-}}{\sfmod{2}{\Typ{0}^-}}{\sfmod{3}{\Typ{0}^-}}, \\
&\cores{\VacMod}{\Typ{0}^+}{\sfmod{-1}{\Typ{0}^+}}{\sfmod{-2}{\Typ{0}^+}}.
\end{aligned}
\end{equation}
\end{subequations}
We thereby deduce two character formulae for the vacuum module as a formal power series (distributions):
\begin{equation} \label{ch:Vac}
\ch{\VacMod} = \sum_{\ell = 1}^{\infty} \brac{-1}^{\ell-1} \ch{\sfmod{\ell}{\Typ{0}}}, \qquad 
\ch{\VacMod} = \sum_{\ell = 0}^{\infty} \brac{-1}^{\ell} \ch{\sfmod{-\ell}{\Typ{0}}}.
\end{equation}
The convergence of these expressions is meant in the following sense:  For each weight $(j,h)$, only a finite number of terms in either sum contribute to the multiplicity of $z^j q^h$.  We shall not dwell on the implication that the difference of these two expressions, a bi-infinite alternating sum of the atypical standard characters, vanishes.  Suffice to say that we regard either of these formulae as deciding on an appropriate topological completion of the span of the standard characters.  It is straightforward to check that the results which follow will not depend on which formula, hence which completion, we choose.

\section{Modular Transformations} \label{sec:Mod}

We prepare for computing S-transformations by calculating the character of a general standard module using \eqnDref{eq:TransformChars}{ch:T}:
\begin{equation}
\fch{\sfmod{\ell}{\Typ{\lambda}}}{z;q} = z^{-\ell} q^{-\ell \brac{\ell+1} / 2} \frac{z^{\lambda} q^{\ell \lambda}}{\func{\eta}{q}^2} \sum_{n \in \ZZ} z^n q^{n \ell} = \frac{z^{\lambda} q^{\ell \lambda + \ell \brac{\ell-1} / 2}}{\func{\eta}{q}^2} \sum_{n \in \ZZ} z^n q^{n \ell}.
\end{equation}
Writing $q = \ee^{2 \pi \ii \tau}$ and $z = \ee^{2 \pi \ii \zeta}$, this simplifies to
\begin{equation} \label{ch:Standards}
\fch{\sfmod{\ell}{\Typ{\lambda}}}{\zeta \big\vert \tau} = \frac{\ee^{\ii \pi \ell \brac{\ell-1} \tau}}{\func{\eta}{\tau}^2} \sum_{n \in \ZZ} \ee^{2 \pi \ii n \lambda} \func{\delta}{\zeta + \ell \tau = n}.
\end{equation}
\begin{thm} \label{thm:SMatrix}
The standard characters \eqref{ch:Standards} have S-transformation
\begin{subequations} \label{eq:SMatrix}
\begin{equation}
\fch{\sfmod{\ell}{\Typ{\lambda}}}{\zeta / \tau \big\vert {-1} / \tau} = \func{A}{\zeta \big\vert \tau} \sum_{m \in \ZZ} \int_{\RR / \ZZ} \Smat{\sfmod{\ell}{\Typ{\lambda}}}{\sfmod{m}{\Typ{\mu}}} \fch{\sfmod{m}{\Typ{\mu}}}{\zeta \big\vert \tau} \: \dd \mu,
\end{equation}
where
\begin{equation} \label{eq:TypSKernel}
\func{A}{\zeta \big\vert \tau} = \frac{\abs{\tau}}{-\ii \tau} \ee^{-\ii \pi \zeta^2 / \tau} \ee^{\ii \pi \zeta / \tau} \ee^{-\ii \pi \zeta}, \qquad 
\Smat{\sfmod{\ell}{\Typ{\lambda}}}{\sfmod{m}{\Typ{\mu}}} = \brac{-1}^{\ell+m} \ee^{-2 \pi \ii \brac{\ell \mu + m \lambda}}.
\end{equation}
\end{subequations}
\end{thm}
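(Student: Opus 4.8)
The plan is to compute the S-transform directly from the explicit distributional formula \eqref{ch:Standards}, treating the characters as distributions in the variable $\zeta$ (equivalently, in $z = \ee^{2\pi\ii\zeta}$). The starting point is to apply the substitution $\tau \mapsto -1/\tau$, $\zeta \mapsto \zeta/\tau$ to the right-hand side of \eqref{ch:Standards}. The eta-function prefactor transforms by the classical identity $\func{\eta}{-1/\tau} = \sqrt{-\ii\tau}\,\func{\eta}{\tau}$, so $\func{\eta}{-1/\tau}^{-2} = (-\ii\tau)^{-1}\func{\eta}{\tau}^{-2}$; this already produces part of the automorphy factor $\func{A}{\zeta\vert\tau}$. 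The Gaussian prefactor $\ee^{\ii\pi\ell(\ell-1)\tau}$ becomes $\ee^{-\ii\pi\ell(\ell-1)/\tau}$, which must be reorganised together with the delta-function support conditions. The essential analytic input is the Poisson resummation / Fourier expansion of the Dirac comb: the sum $\sum_{n\in\ZZ}\ee^{2\pi\ii n\lambda}\func{\delta}{\zeta/\tau + \ell(-1/\tau) = n}$ is a periodic distribution in an appropriate sense, and I would expand it in the Fourier modes dual to the target variable, namely against the exponentials $\ee^{2\pi\ii m\mu}$ appearing on the right-hand side after integrating over $\mu \in \RR/\ZZ$.

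Concretely, the key step is to recognise that the argument of the delta functions, $\zeta/\tau - \ell/\tau = n$, can be rewritten as $\zeta = n\tau + \ell$, i.e.\ as a constraint of the form $\zeta + m'\tau = n'$ after suitable relabelling — this is exactly the support of a standard character $\fch{\sfmod{m}{\Typ{\mu}}}{\zeta\vert\tau}$ with $m = -n$ (roughly). Summing a delta supported on the lattice $\{\zeta + m\tau \in \ZZ\}$ over all such $m$, weighted by the Fourier kernel, and extracting the coefficient of each $\ee^{2\pi\ii n\mu}$ via the integral $\int_{\RR/\ZZ}(\cdots)\,\dd\mu$, reconstitutes precisely the combination on the right-hand side of \eqref{eq:SMatrix}. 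The coefficients that drop out of this bookkeeping are $(-1)^{\ell+m}\ee^{-2\pi\ii(\ell\mu + m\lambda)}$, giving the claimed $\Smat{\sfmod{\ell}{\Typ{\lambda}}}{\sfmod{m}{\Typ{\mu}}}$; the residual scalar factors ($\ee^{-\ii\pi\zeta^2/\tau}$ from completing the square in the Gaussian, $\ee^{\ii\pi\zeta/\tau}$ and $\ee^{-\ii\pi\zeta}$ from the linear terms, and $\abs{\tau}/(-\ii\tau)$ from the eta factor together with an absolute-value sign that must be inserted to make the later unitarity statement work) assemble into $\func{A}{\zeta\vert\tau}$.

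The main obstacle, and the only genuinely delicate point, is handling the automorphy factor $\func{A}{\zeta\vert\tau}$ rigorously in the distributional framework. The naive manipulation produces a factor depending on $\zeta$ and $\tau$ in a way that does \emph{not} simply pull through the sum over $m$ and the integral over $\mu$, because on the support of each delta the value of $\zeta$ is pinned to $n\tau + \ell$, so $\ee^{-\ii\pi\zeta^2/\tau}$ etc.\ are being evaluated at lattice points and one must check that the $\zeta$-dependent prefactor extracted out front is consistent with \emph{all} terms simultaneously. This requires verifying that, on each support hyperplane $\zeta + m\tau = n$, the discrepancy between $\func{A}{\zeta\vert\tau}$ and the term-by-term Gaussian factor is absorbed into the discrete data $(m,n,\lambda,\mu)$ — i.e.\ into the matrix $\modS$ — and this is where the ``judicious extension of the characters'' alluded to in the introduction does the work. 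I would therefore carry out the delta-function support substitution first, collect \emph{all} $\zeta$- and $\tau$-dependence, split it into a universal prefactor (candidate $\func{A}$) times a lattice-point-valued remainder, and only then confirm the remainder matches $\modS$. The remaining verifications — periodicity in $\lambda$ modulo $\ZZ$, convergence of the resulting integral-sum as a distribution, and consistency with \eqref{eq:TransformChars} — are routine and can be relegated to remarks.
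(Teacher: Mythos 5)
Your proposal is correct and matches the paper's approach: the paper simply states that the theorem ``may be verified by direct substitution'' and omits all details, and your forward computation (eta transformation, delta-function support relabelling $n\mapsto -m$, collecting the Gaussian and linear exponents on the support $\zeta+m\tau=\ell$ into $\func{A}{\zeta\big\vert\tau}$ times the kernel) is exactly that substitution carried out explicitly. The only cosmetic slip is attributing the $\abs{\tau}$ to a unitarity fix rather than to the Jacobian of the delta function under $\zeta\mapsto\zeta/\tau$, and the appeal to Poisson resummation is unnecessary since the characters are already Dirac combs; neither affects the validity of the argument.
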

\noindent This theorem may be verified by direct substitution.  We omit the details.

Recall from \eqref{ch:Vac} that all characters may be expressed as (infinite) linear combinations of the standard characters \eqref{ch:Standards}.  The latter therefore form a (topological) basis for the space of characters.  In this basis, which we call the standard basis, the S-transformation is manifestly symmetric and unitary:
\begin{subequations}
\begin{gather}
\Smat{\sfmod{\ell}{\Typ{\lambda}}}{\sfmod{m}{\Typ{\mu}}} = \Smat{\sfmod{m}{\Typ{\mu}}}{\sfmod{\ell}{\Typ{\lambda}}}, \label{eq:SmatSymm} \\
\sum_{m \in \ZZ} \int_{\RR / \ZZ} \Smat{\sfmod{\ell}{\Typ{\lambda}}}{\sfmod{m}{\Typ{\mu}}} \Smat{\sfmod{n}{\Typ{\nu}}}{\sfmod{m}{\Typ{\mu}}}^* \: \dd \mu = \delta_{n=\ell} \func{\delta}{\nu = \lambda \bmod{1}}. \label{eq:SmatUnit}
\end{gather}
Its square may also be identified with conjugation at the level of the standard characters:
\begin{equation}
\sum_{m \in \ZZ} \int_{\RR / \ZZ} \Smat{\sfmod{\ell}{\Typ{\lambda}}}{\sfmod{m}{\Typ{\mu}}} \Smat{\sfmod{m}{\Typ{\mu}}}{\sfmod{n}{\Typ{\nu}}} \: \dd \mu = \delta_{n=-\ell} \func{\delta}{\nu = -\lambda \bmod{1}}. \label{eq:SmatConj}
\end{equation}
\end{subequations}
These three familiar properties lead us to expect that substituting this integration kernel into a Verlinde formula will result in the Grothendieck fusion coefficients.

Before doing this, we need to determine the S-transformation for the atypical characters.  This follows readily from the character formulae \eqref{ch:Vac} and \thmref{thm:SMatrix}.
\begin{cor} \label{cor:SMatrixAtyp}
The simple atypical characters have S-transformations
\begin{subequations} \label{eq:SMatrixAtyp}
\begin{equation}
\fch{\sfmod{\ell}{\VacMod}}{\zeta / \tau \big\vert {-1} / \tau} = \func{A}{\zeta \big\vert \tau} \sum_{m \in \ZZ} \int_{\RR / \ZZ} \Smat{\sfmod{\ell}{\VacMod}}{\sfmod{m}{\Typ{\mu}}} \fch{\sfmod{m}{\Typ{\mu}}}{\zeta \big\vert \tau} \: \dd \mu,
\end{equation}
where
\begin{equation} \label{eq:AtypSKernel}
\Smat{\sfmod{\ell}{\VacMod}}{\sfmod{m}{\Typ{\mu}}} = \brac{-1}^{\ell+m+1} \frac{\ee^{-2 \pi \ii \brac{\ell + 1/2} \mu}}{\ee^{\ii \pi \mu} - \ee^{-\ii \pi \mu}}.
\end{equation}
\end{subequations}
\end{cor}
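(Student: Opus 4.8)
The plan is to transform the two resolutions of the vacuum module recorded in \eqref{ch:Vac} term by term, using the S\nobreakdash-transformation of the standard characters from \thmref{thm:SMatrix}. First I would apply the spectral flow $\sfaut^{\ell}$ to either identity in \eqref{ch:Vac}; since $\sfaut^{\ell} \sfaut^{k} = \sfaut^{\ell+k}$ and the twist \eqref{eq:TransformChars} acts linearly, this gives
\[
\ch{\sfmod{\ell}{\VacMod}} = \sum_{k=1}^{\infty} \brac{-1}^{k-1} \ch{\sfmod{\ell+k}{\Typ{0}}} = \sum_{k=0}^{\infty} \brac{-1}^{k} \ch{\sfmod{\ell-k}{\Typ{0}}},
\]
convergence meaning again that only finitely many terms contribute to any given weight. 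Because the automorphy factor $\func{A}{\zeta \big\vert \tau}$ of \thmref{thm:SMatrix} is common to every standard character, it pulls out of the sum, and (after interchanging the sum over $k$ with $\int_{\RR / \ZZ} \dd \mu$ and the sum over $m$) one reads off, on setting $\lambda = 0$ in \eqref{eq:TypSKernel},
\[
\Smat{\sfmod{\ell}{\VacMod}}{\sfmod{m}{\Typ{\mu}}} = \sum_{k=1}^{\infty} \brac{-1}^{k-1} \Smat{\sfmod{\ell+k}{\Typ{0}}}{\sfmod{m}{\Typ{\mu}}} = \brac{-1}^{\ell+m} \ee^{-2 \pi \ii \ell \mu} \sum_{k=1}^{\infty} \brac{-1}^{2k-1} \ee^{-2 \pi \ii k \mu}.
\]

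The only substantive step is evaluating this geometric series. As a function of $\mu \in \RR / \ZZ$ it does not converge, so, in keeping with the distributional conventions of \secref{sec:Char}, I would read $\sum_{k \ge 1} \ee^{-2 \pi \ii k \mu}$ as the boundary value of the meromorphic function $\ee^{-2 \pi \ii \mu} \brac{1 - \ee^{-2 \pi \ii \mu}}^{-1} = \ee^{-\ii \pi \mu} \brac{\ee^{\ii \pi \mu} - \ee^{-\ii \pi \mu}}^{-1}$, which has a simple pole at $\mu = 0$. This yields precisely
\[
\Smat{\sfmod{\ell}{\VacMod}}{\sfmod{m}{\Typ{\mu}}} = \brac{-1}^{\ell+m+1} \frac{\ee^{-2 \pi \ii \brac{\ell + 1/2} \mu}}{\ee^{\ii \pi \mu} - \ee^{-\ii \pi \mu}},
\]
which is \eqref{eq:AtypSKernel}, the pole at $\mu = 0$ being natural since $[\lambda] = [0]$ is exactly the value excluded from the typical family. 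Running the same computation with the \emph{other} resolution in \eqref{ch:Vac} produces instead $\sum_{k \ge 0} \ee^{2 \pi \ii k \mu}$, whose regularised value $\brac{1 - \ee^{2 \pi \ii \mu}}^{-1} = -\ee^{-\ii \pi \mu} \brac{\ee^{\ii \pi \mu} - \ee^{-\ii \pi \mu}}^{-1}$ gives the identical kernel; this is the promised check that the answer is independent of which topological completion of the standard span one uses.

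The main obstacle is one of rigour rather than computation: justifying the interchange of $\sum_{k}$ with $\int_{\RR / \ZZ} \dd \mu \sum_{m}$ and attaching a precise meaning to the divergent geometric series, i.e.\ making the whole chain an honest identity of distributions in $\zeta$ that is meromorphic in $\mu$. As with \thmref{thm:SMatrix}, this is most cleanly dispatched by taking \eqref{eq:AtypSKernel} as an ansatz and verifying it by direct substitution: feeding the proposed kernel into the right-hand side of the corollary, the integral over $\mu$ against the delta-supported standard characters \eqref{ch:Standards} resums to $\sum_{k \ge 1} \brac{-1}^{k-1} \fch{\sfmod{\ell+k}{\Typ{0}}}{\zeta / \tau \big\vert {-1} / \tau} = \fch{\sfmod{\ell}{\VacMod}}{\zeta / \tau \big\vert {-1} / \tau}$, which closes the argument. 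I would spell out this verification, and note that the required distributional expansion of $\brac{\ee^{\ii \pi \mu} - \ee^{-\ii \pi \mu}}^{-1}$ is exactly the standard geometric-series identity used above.
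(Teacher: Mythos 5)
Your proposal is correct and follows essentially the same route as the paper: the corollary is obtained by applying the S-transformation of \thmref{thm:SMatrix} term by term to the (spectrally flowed) resolutions \eqref{ch:Vac} and summing the resulting geometric series, with the denominator in \eqref{eq:AtypSKernel} understood as shorthand for that series; the paper likewise notes that both resolutions yield the same kernel. Your algebra checks out, so no further comment is needed.
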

\noindent Here, the denominator should also be regarded as shorthand for a formal power series in $\ee^{2 \pi \ii \mu}$.  In fact, it arises from summing a geometric series at its radius of convergence, a fact which may be useful to remember for the Verlinde computations to come.  We remark that both the character formulae of \eqref{ch:Vac} conveniently yield the same atypical S-transformation kernel when expressed using denominators (though the respective convergence regions are disjoint).

Finally, we address the automorphy factor $\func{A}{\zeta \big\vert \tau}$ appearing in the transformation rules \eqref{eq:SMatrix} and \eqref{eq:SMatrixAtyp}.  This factor does not depend upon the labels characterising the modules in the S-transformation kernel and, as with a similar (but less complicated) factor appearing in the S-transformation of integrable Kac-Moody module characters \cite{KacInf84}, it may be absorbed by augmenting the definition of characters by another variable $y$ which tracks the eigenvalue of the Cartan element $\wun$.  This eigenvalue is always $1$, so we end up multiplying all $\Ghost$-module characters by $y = \ee^{2 \pi \ii \theta}$.
\begin{prop} \label{prop:ModAct}
The transformations
\begin{equation} \label{eq:Crazy}
\modS \colon \left( \theta \middle\vert \zeta \middle\vert \tau \right) \longmapsto \left( \theta + \frac{\zeta^2}{2 \tau} - \frac{\zeta}{2 \tau} + \frac{\zeta}{2} + \frac{1}{2 \pi} \brac{\arg \tau - \frac{\pi}{2}} \middle\vert \frac{\zeta}{\tau} \middle\vert -\frac{1}{\tau} \right), \quad 
\modT \colon \left( \theta \middle\vert \zeta \middle\vert \tau \right) \longmapsto \left( \theta + \frac{1}{12} \middle\vert \zeta \middle\vert \tau + 1 \right) 
\end{equation}
define an action of the modular group $\SLG{SL}{2;\ZZ}$.  That is, $\modS^2 = \brac{\modS \modT}^3 = \modC$ and $\modC^2$ is the identity.
\end{prop}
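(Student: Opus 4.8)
The plan is to reduce everything to the standard presentation $\SLG{SL}{2;\ZZ} = \bigl\langle \mathsf{s}, \mathsf{t} \mid \mathsf{s}^4 = \id,\ (\mathsf{s}\mathsf{t})^3 = \mathsf{s}^2 \bigr\rangle$, where $\mathsf{s}$ and $\mathsf{t}$ are the usual generators acting on the upper half-plane by $\tau \mapsto -1/\tau$ and $\tau \mapsto \tau+1$. It then suffices to verify that $\modS$ and $\modT$ are bijections of the space of triples $(\theta \mid \zeta \mid \tau)$ with $\Im \tau > 0$ (clear for $\modT$, and a consequence of $\modS^4 = \id$ for $\modS$) satisfying $\modS^4 = \id$ and $(\modS\modT)^3 = \modS^2$: the assignment $\mathsf{s} \mapsto \modS$, $\mathsf{t} \mapsto \modT$ then extends to a homomorphism, which is the asserted action, and its common value $\modS^2 = (\modS\modT)^3$ is the transformation $\modC \colon (\theta \mid \zeta \mid \tau) \mapsto (\theta+\zeta \mid -\zeta \mid \tau)$ induced by conjugation of modules on the $y$-decorated characters (via \eqnref{eq:TransformChars}, using that $\wun$ acts as $1$). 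The key reduction is that, on the pair $(\zeta,\tau)$ alone, $\modS$ and $\modT$ act by $\bigl(\begin{smallmatrix} a & b \\ c & d \end{smallmatrix}\bigr) \colon (\zeta,\tau) \mapsto \bigl(\tfrac{\zeta}{c\tau+d}, \tfrac{a\tau+b}{c\tau+d}\bigr)$ with $\mathsf{s}$ and $\mathsf{t}$ respectively; since $c\tau+d$ is an automorphy cocycle this is a genuine left action of $\SLG{SL}{2;\ZZ}$, so on $(\zeta,\tau)$ the relations already hold and both $\modS^2$ and $(\modS\modT)^3$ restrict there to $(\zeta,\tau) \mapsto (-\zeta,\tau)$. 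Everything therefore comes down to the $\theta$-coordinate.

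For the $\theta$-coordinate I would record that $\modS$ increments $\theta$ by
\begin{equation*}
f(\zeta,\tau) = \frac{\zeta^2}{2\tau} - \frac{\zeta}{2\tau} + \frac{\zeta}{2} + \frac{1}{2\pi}\brac{\arg\tau - \frac{\pi}{2}},
\end{equation*}
which is essentially minus the logarithm of the automorphy factor of \thmref{thm:SMatrix}, while $\modT$ increments $\theta$ by the constant $\tfrac{1}{12}$. Since each increment depends only on the $(\zeta,\tau)$ entering that step and not on $\theta$, the $\theta$-increment of a word in $\modS$ and $\modT$ is the sum of the increments along the $(\zeta,\tau)$-orbit. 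For $\modS^2$ the orbit is $(\zeta,\tau) \to (\zeta/\tau, -1/\tau) \to (-\zeta, \tau)$, and in $f(\zeta,\tau) + f(\zeta/\tau,-1/\tau)$ the $\zeta^2/\tau$ and $\zeta/\tau$ terms cancel, the two ``bare'' $\zeta/2$ terms add to $\zeta$, and the $\arg$ terms give $\tfrac{1}{2\pi}\bigl(\arg\tau + \arg(-1/\tau) - \pi\bigr) = 0$ because $\arg\tau + \arg(-1/\tau) = \pi$ on the upper half-plane. Thus $\modS^2(\theta \mid \zeta \mid \tau) = (\theta+\zeta \mid -\zeta \mid \tau) = \modC$, from which $\modC^2 = \id$ is immediate (the two $\theta$-shifts are $+\zeta$ and $-\zeta$), and hence $\modS^4 = \id$.

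For $(\modS\modT)^3$ I would iterate $\modS\modT$ (with the convention that this means $\modT$ followed by $\modS$) three times; the $(\zeta,\tau)$-part traverses the length-$6$ orbit of $\mathsf{s}\mathsf{t}$ and returns to $(-\zeta,\tau)$, the three $\modS$-steps being applied at the $\tau$-values $\tau+1$, $\tfrac{\tau}{\tau+1}$ and $-\tfrac{1}{\tau}$. Collecting the $\theta$-increments: the three constants from the $\modT$-steps contribute $3 \cdot \tfrac{1}{12} = \tfrac{1}{4}$; the rational-in-$\zeta$ parts of the three values of $f$ telescope to $\zeta$ just as for $\modS^2$; and the three $\arg$-parts contribute $\tfrac{1}{2\pi}\bigl(\arg(\tau+1) + \arg\bigl(\tfrac{\tau}{\tau+1}\bigr) + \arg\bigl(-\tfrac{1}{\tau}\bigr) - \tfrac{3\pi}{2}\bigr) = \tfrac{1}{2\pi}\bigl(\pi - \tfrac{3\pi}{2}\bigr) = -\tfrac{1}{4}$, since these three arguments sum to $\pi$ (their product being $-1$). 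The net $\theta$-shift is therefore $\tfrac{1}{4} + \zeta - \tfrac{1}{4} = \zeta$, so $(\modS\modT)^3 = \modC = \modS^2$, which establishes both relations and hence the $\SLG{SL}{2;\ZZ}$-action. I expect the only genuinely delicate point to be this branch bookkeeping for $\arg$ --- equivalently, fixing the branch of the square root implicit in the automorphy factor of \thmref{thm:SMatrix}: one needs $\arg\tau + \arg(-1/\tau) = \pi$ and $\arg(\tau+1) + \arg\bigl(\tfrac{\tau}{\tau+1}\bigr) + \arg\bigl(-\tfrac{1}{\tau}\bigr) = \pi$ throughout $\Im\tau > 0$. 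The first is forced since both summands lie in $(0,\pi)$ there; for the second, all three summands still lie in $(0,\pi)$ and the product of the three quantities is $-1$, so the sum is $\pi$ or $3\pi$, and continuity together with a single evaluation (say at $\tau = \ii$) rules out $3\pi$. The remaining ingredients --- the cocycle identity and the telescoping of rational terms --- are routine.
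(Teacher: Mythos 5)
Your proposal is correct and follows essentially the same route as the paper, whose proof is exactly the "straightforward verification" that both $\modS^2$ and $\brac{\modS\modT}^3$ send $\left(\theta \middle\vert \zeta \middle\vert \tau\right)$ to $\left(\theta+\zeta \middle\vert -\zeta \middle\vert \tau\right)$, which visibly squares to the identity; you have simply filled in the telescoping of the rational terms and the branch bookkeeping for $\arg$ that the paper leaves implicit. Your computations (the cancellation to a net $\theta$-shift of $\zeta$ in both cases, and the identities $\arg\tau+\arg(-1/\tau)=\pi$ and $\arg(\tau+1)+\arg\bigl(\tfrac{\tau}{\tau+1}\bigr)+\arg\bigl(-\tfrac{1}{\tau}\bigr)=\pi$ on the upper half-plane) all check out.
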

\noindent The proof is a straightforward verification that $\modS^2$ and $\brac{\modS \modT}^3$ map $\left( \theta \middle\vert \zeta \middle\vert \tau \right)$ to $\left( \theta + \zeta \middle\vert -\zeta \middle\vert \tau \right)$; this obviously squares to the identity.  We remark that the term involving $\arg \tau$ in \eqref{eq:Crazy} accounts for the factor of $\abs{\tau} / -\ii \tau$ in $\func{A}{\zeta \big\vert \tau}$.\footnote{This $\tau$-dependent factor was also present in the modular S-transformations of the standard characters of admissible level $\AKMA{sl}{2}$ \cite{CreMod12,CreMod13}, but was argued to be inconsequential as phases cancel when considering modular invariants and Verlinde computations.  A more satisfactory explanation is to absorb it into the automorphy factor $\func{A}{\zeta \big\vert \tau}$ as we have done here for the standard ghost characters.}  It now follows that inserting $y$ into characters and transforming as in \eqref{eq:Crazy} will cancel the factor $\func{A}{\zeta \big\vert \tau}$ in \eqref{eq:SMatrix} and \eqref{eq:SMatrixAtyp}.  This justifies our separation of this automorphy factor from the S-transformation kernel.

\section{The Verlinde Formula} \label{sec:Verlinde}

We define a product $\Grfuse$ on the (appropriate topological completion of the) span of the standard characters by
\begin{subequations} \label{eq:Verlinde}
\begin{equation} \label{eq:Verlinde1}
\ch{\mathcal{M}} \Grfuse \ch{\mathcal{N}} = \sum_{n \in \ZZ} \int_{\RR / \ZZ} \fuscoeff{\mathcal{M}}{\mathcal{N}}{\sfmod{n}{\Typ{\nu}}} \ch{\sfmod{n}{\Typ{\nu}}} \: \dd \nu,
\end{equation}
where the coefficients appearing in the integrand are determined by the following variant of the Verlinde formula:
\begin{equation} \label{eq:Verlinde2}
\fuscoeff{\mathcal{M}}{\mathcal{N}}{\sfmod{n}{\Typ{\nu}}} = \sum_{r \in \ZZ} \int_{\RR / \ZZ} \frac{\Smat{\mathcal{M}}{\sfmod{r}{\Typ{\rho}}} \Smat{\mathcal{N}}{\sfmod{r}{\Typ{\rho}}} \Smat{\sfmod{n}{\Typ{\nu}}}{\sfmod{r}{\Typ{\rho}}}^*}{\Smat{\VacMod}{\sfmod{r}{\Typ{\rho}}}} \: \dd \rho.
\end{equation}
\end{subequations}
We will demonstrate shortly that this product, which we call the \emph{Verlinde product}, is indeed well-defined --- \eqref{eq:Verlinde1} always gives a finite linear combination of standard characters or infinite alternating sums, the latter being interpreted as atypical simple characters.  For now, we note that the Verlinde product is commutative and associative.  The unitarity \eqref{eq:SmatUnit} of the S-transformation implies that the unit is the vacuum character $\ch{\VacMod}$.

\begin{lem} \label{lem:GrFusionSpecFlow}
The Verlinde product satisfies
\begin{equation}
\ch{\sfmod{\ell}{\mathcal{M}}} \Grfuse \ch{\sfmod{m}{\mathcal{N}}} = \sfmod{\ell+m}{\ch{\mathcal{M}} \Grfuse \ch{\mathcal{N}}},
\end{equation}
where the \rhs{} is to be interpreted as evaluating the Verlinde product in the standard basis and applying spectral flow to each basis element uniformly.
\end{lem}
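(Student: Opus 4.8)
The plan is to reduce the statement to a single spectral-flow covariance property of the S-kernels from \thmref{thm:SMatrix} and \corref{cor:SMatrixAtyp}, and then to manipulate the defining formulae \eqref{eq:Verlinde1}--\eqref{eq:Verlinde2}. First I would record the identity
\begin{equation*}
\Smat{\sfmod{\ell+k}{\mathcal{M}}}{\sfmod{r}{\Typ{\rho}}} = \brac{-1}^{k} \ee^{-2 \pi \ii k \rho} \: \Smat{\sfmod{\ell}{\mathcal{M}}}{\sfmod{r}{\Typ{\rho}}}, \qquad k, \ell, r \in \ZZ,
\end{equation*}
valid whenever $\mathcal{M}$ is a standard module or a simple atypical module, hence (by linearity of the S-transformation) whenever $\ch{\mathcal{M}}$ lies in the standard span.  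This is immediate on inspecting the explicit kernels: in \eqref{eq:TypSKernel} the whole dependence of $\Smat{\sfmod{\ell}{\Typ{\lambda}}}{\sfmod{r}{\Typ{\rho}}}$ on the spectral-flow label $\ell$ is the factor $\brac{-1}^{\ell} \ee^{-2 \pi \ii \ell \rho}$, and likewise for $\Smat{\sfmod{\ell}{\VacMod}}{\sfmod{r}{\Typ{\rho}}}$ in \eqref{eq:AtypSKernel}; the case $\mathcal{M} = \conjmod{\VacMod} \cong \sfmod{-1}{\VacMod}$ is then a special case.  Conceptually this covariance is just the compatibility of the spectral-flow relation for characters in \eqref{eq:TransformChars} with the S-transformation of \propref{prop:ModAct}, but I would simply verify it directly.

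Next I would substitute $\mathcal{M} \mapsto \sfmod{\ell}{\mathcal{M}}$ and $\mathcal{N} \mapsto \sfmod{m}{\mathcal{N}}$ into \eqref{eq:Verlinde2}.  Applying the covariance to the two numerator factors $\Smat{\sfmod{\ell}{\mathcal{M}}}{\sfmod{r}{\Typ{\rho}}}$ and $\Smat{\sfmod{m}{\mathcal{N}}}{\sfmod{r}{\Typ{\rho}}}$ extracts an overall prefactor $\brac{-1}^{\ell+m} \ee^{-2 \pi \ii \brac{\ell+m} \rho}$ from the integrand, leaving the vacuum denominator $\Smat{\VacMod}{\sfmod{r}{\Typ{\rho}}}$ and the output factor $\Smat{\sfmod{n}{\Typ{\nu}}}{\sfmod{r}{\Typ{\rho}}}^*$ alone.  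Taking the complex conjugate of the $\mathcal{M} = \Typ{\nu}$, $k = \ell+m$ instance of the covariance and using $\brac{-1}^{-k} = \brac{-1}^{k}$ shows that this prefactor is exactly what turns $\Smat{\sfmod{n}{\Typ{\nu}}}{\sfmod{r}{\Typ{\rho}}}^*$ into $\Smat{\sfmod{n-\ell-m}{\Typ{\nu}}}{\sfmod{r}{\Typ{\rho}}}^*$.  Comparing with \eqref{eq:Verlinde2} once more yields
\begin{equation*}
\fuscoeff{\sfmod{\ell}{\mathcal{M}}}{\sfmod{m}{\mathcal{N}}}{\sfmod{n}{\Typ{\nu}}} = \fuscoeff{\mathcal{M}}{\mathcal{N}}{\sfmod{n-\ell-m}{\Typ{\nu}}}.
\end{equation*}
Feeding this into \eqref{eq:Verlinde1} and reindexing the sum over $\ZZ$ by $n \mapsto n+\ell+m$ then writes $\ch{\sfmod{\ell}{\mathcal{M}}} \Grfuse \ch{\sfmod{m}{\mathcal{N}}}$ as $\sum_{n \in \ZZ} \int_{\RR/\ZZ} \fuscoeff{\mathcal{M}}{\mathcal{N}}{\sfmod{n}{\Typ{\nu}}} \ch{\sfmod{n+\ell+m}{\Typ{\nu}}} \: \dd \nu$; since $\ch{\sfmod{n+\ell+m}{\Typ{\nu}}}$ is the image under $\sfaut^{\ell+m}$ of the standard basis element $\ch{\sfmod{n}{\Typ{\nu}}}$, the right-hand side is by definition $\sfmod{\ell+m}{\ch{\mathcal{M}} \Grfuse \ch{\mathcal{N}}}$.

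The main obstacle is bookkeeping in the topological completion: by \eqref{eq:Verlinde1} and the discussion following \eqref{ch:Vac}, the Verlinde product may produce a bi-infinite alternating sum of standard characters (reinterpreted as an atypical simple character), so I would have to check that the reindexing $n \mapsto n+\ell+m$ respects the chosen completion.  This is automatic: spectral flow merely translates the integer grading label and, by \eqref{eq:FlowedWeights}, shifts the $(J_0, L_0)$-weight of every summand by an $n$-independent amount, so the defining property ``for each weight $(j,h)$ only finitely many terms contribute'' is preserved, and an infinite alternating standard sum is carried to another such.  The interchanges of the $\rho$- and $\nu$-integrations with the sums over $\ZZ$ are formal and are handled exactly as in the verification that $\Grfuse$ is associative.
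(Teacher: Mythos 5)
Your proposal is correct and follows essentially the same route as the paper: factor the spectral-flow dependence out of the S-kernels, absorb the resulting phase into the conjugated output kernel to get $\fuscoeff{\sfmod{\ell}{\mathcal{M}}}{\sfmod{m}{\mathcal{N}}}{\sfmod{n}{\Typ{\nu}}} = \fuscoeff{\mathcal{M}}{\mathcal{N}}{\sfmod{n-\ell-m}{\Typ{\nu}}}$, and reindex. Your extra remarks on the topological completion are a welcome (if not strictly required) addition to the paper's more terse argument.
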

\begin{proof}
This follows by noting that the S-transformation kernels \eqref{eq:TypSKernel} and \eqref{eq:AtypSKernel} may be factored as
\begin{equation}
\Smat{\sfmod{\ell}{\mathcal{M}}}{\sfmod{r}{\Typ{\rho}}} = \brac{-1}^{\ell} \ee^{-2 \pi \ii \ell \rho} \: \Smat{\mathcal{M}}{\sfmod{r}{\Typ{\rho}}},
\end{equation}
where $\mathcal{M}$ is either $\Typ{\lambda}$ or $\VacMod$.  Applying this factorisation to the kernels for $\mathcal{M}$ and $\mathcal{N}$ appearing in \eqref{eq:Verlinde2}, and then absorbing both phases into the kernel for $\sfmod{n}{\Typ{\nu}}$, we arrive at
\begin{equation}
\fuscoeff{\sfmod{\ell}{\mathcal{M}}}{\sfmod{m}{\mathcal{N}}}{\sfmod{n}{\Typ{\nu}}} = \fuscoeff{\mathcal{M}}{\mathcal{N}}{\sfmod{-\ell-m+n}{\Typ{\nu}}}.
\end{equation}
Replacing $n$ by $\ell+m+n$ now gives the desired result.
\end{proof}
\begin{thm} \label{thm:GrFusion}
The Verlinde product rules take the form
\begin{subequations} \label{GrFR}
\begin{align}
\ch{\sfmod{\ell}{\VacMod}} \Grfuse \ch{\sfmod{m}{\VacMod}} &= \ch{\sfmod{\ell+m}{\VacMod}}, \label{GrFR:AA} \\
\ch{\sfmod{\ell}{\VacMod}} \Grfuse \ch{\sfmod{m}{\Typ{\mu}}} &= \ch{\sfmod{\ell+m}{\Typ{\mu}}}, \label{GrFR:AT} \\
\ch{\sfmod{\ell}{\Typ{\lambda}}} \Grfuse \ch{\sfmod{m}{\Typ{\mu}}} &= \ch{\sfmod{\ell+m}{\Typ{\lambda + \mu}}} + \ch{\sfmod{\ell+m-1}{\Typ{\lambda + \mu}}}. \label{GrFR:TT}
\end{align}
\end{subequations}
In particular, the Verlinde multiplicities \eqref{eq:Verlinde2} are non-negative integer multiples of delta functions.
\end{thm}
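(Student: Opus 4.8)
The plan is to evaluate the Verlinde formula \eqref{eq:Verlinde2} directly, feeding in the explicit S-transformation kernels of \thmref{thm:SMatrix} and \corref{cor:SMatrixAtyp}.  By \lemref{lem:GrFusionSpecFlow}, every Verlinde product of standard or simple atypical characters is a uniform spectral flow of one with $\ell = m = 0$, so it suffices to establish \eqref{GrFR:AA}, \eqref{GrFR:AT} and \eqref{GrFR:TT} at $\ell = m = 0$; that is, to compute $\ch{\VacMod} \Grfuse \ch{\VacMod}$, $\ch{\VacMod} \Grfuse \ch{\Typ{\mu}}$ and $\ch{\Typ{\lambda}} \Grfuse \ch{\Typ{\mu}}$.

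The first step is to pin down the role of the denominator $\Smat{\VacMod}{\sfmod{r}{\Typ{\rho}}}$ in \eqref{eq:Verlinde2}.  Writing $\ee^{\ii \pi \rho} - \ee^{-\ii \pi \rho} = \ee^{\ii \pi \rho} \brac{1 - \ee^{-2 \pi \ii \rho}}$ unpacks the shorthand of \corref{cor:SMatrixAtyp} into $\Smat{\VacMod}{\sfmod{r}{\Typ{\rho}}} = \brac{-1}^{r+1} \sum_{k \ge 1} \ee^{-2 \pi \ii k \rho}$, whose \emph{reciprocal} is the honest Laurent polynomial $\brac{-1}^{r+1} \brac{\ee^{2 \pi \ii \rho} - 1}$.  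This is what makes \eqref{eq:Verlinde2} well-posed: in the typical--typical case the denominator is simply replaced by this reciprocal, in the vacuum--typical case it cancels a like factor in the numerator, and in the vacuum--vacuum case it cancels one of the two numerator factors and leaves a single power series.  In every case the $\rho$-integrand is then a finite sum of exponentials $\ee^{2 \pi \ii N \rho}$ (times, at most, that one power series) multiplied by an $r$-dependent phase, so the $\rho$-integral extracts a Fourier coefficient and the $r$-sum assembles a delta comb $\func{\delta}{\nu = \nu_{0} \bmod 1}$.

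Carrying this out: for \eqref{GrFR:TT} the two typical kernels contribute $\brac{-1}^{r} \ee^{-2 \pi \ii r \lambda}$ and $\brac{-1}^{r} \ee^{-2 \pi \ii r \mu}$, the $\Smat{\sfmod{n}{\Typ{\nu}}}{\sfmod{r}{\Typ{\rho}}}^{*}$ factor contributes $\brac{-1}^{n+r} \ee^{2 \pi \ii \brac{n \rho + r \nu}}$, and $1 / \Smat{\VacMod}{\sfmod{r}{\Typ{\rho}}}$ contributes $\brac{-1}^{r+1} \brac{\ee^{2 \pi \ii \rho} - 1}$; all signs collapse to $\brac{-1}^{n+1}$, the $r$-sum gives $\func{\delta}{\nu = \lambda + \mu \bmod 1}$ and the $\rho$-integral gives $\delta_{n=-1} - \delta_{n=0}$, whence $\fuscoeff{\Typ{\lambda}}{\Typ{\mu}}{\sfmod{n}{\Typ{\nu}}} = \brac{\delta_{n=0} + \delta_{n=-1}} \func{\delta}{\nu = \lambda + \mu \bmod 1}$ and \eqref{eq:Verlinde1} reproduces \eqref{GrFR:TT}.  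The case \eqref{GrFR:AT} runs identically, with one surviving $\Smat{\VacMod}{\cdots}$ cancelled by the denominator, leaving $\fuscoeff{\VacMod}{\Typ{\mu}}{\sfmod{n}{\Typ{\nu}}} = \delta_{n=0} \func{\delta}{\nu = \mu \bmod 1}$.  For \eqref{GrFR:AA} one factor of $\Smat{\VacMod}{\sfmod{r}{\Typ{\rho}}}$ survives as the power series $\brac{-1}^{r+1} \sum_{k \ge 1} \ee^{-2 \pi \ii k \rho}$, so the $\rho$-integral against $\ee^{2 \pi \ii n \rho}$ returns $1$ when $n \ge 1$ and $0$ otherwise; hence $\fuscoeff{\VacMod}{\VacMod}{\sfmod{n}{\Typ{\nu}}}$ equals $\brac{-1}^{n+1} \func{\delta}{\nu = 0 \bmod 1}$ for $n \ge 1$ and vanishes otherwise, and \eqref{eq:Verlinde1} produces $\sum_{\ell \ge 1} \brac{-1}^{\ell - 1} \ch{\sfmod{\ell}{\Typ{0}}}$, which is exactly $\ch{\VacMod}$ by the first resolution identity of \eqref{ch:Vac}.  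Non-negativity and integrality of the multiplicities are then immediate in the two typical cases, and in the vacuum case follow once the alternating tail is re-summed into the simple character $\ch{\VacMod}$; this last step also verifies the well-definedness claimed after \eqref{eq:Verlinde}.

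The main obstacle is not structural but one of bookkeeping the formal-power-series conventions: one must keep track of the direction in which each denominator of \corref{cor:SMatrixAtyp} is expanded, check that the quotient appearing in \eqref{eq:Verlinde2} is a genuine Laurent polynomial in $\ee^{2 \pi \ii \rho}$ so that the $\rho$-integral is literally the extraction of a Fourier coefficient, and confirm that the value obtained for $\ch{\VacMod} \Grfuse \ch{\VacMod}$ is independent of that choice --- expanding the surviving $\Smat{\VacMod}{\cdots}$ as a series in $\ee^{+2 \pi \ii \rho}$ instead yields $\sum_{\ell \ge 0} \brac{-1}^{\ell} \ch{\sfmod{-\ell}{\Typ{0}}}$, the \emph{co}resolution formula of \eqref{ch:Vac}, in accordance with the remark that both topological completions give the same Verlinde product.
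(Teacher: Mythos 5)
Your proposal is correct and follows essentially the same route as the paper: reduce to $\ell = m = 0$ via \lemref{lem:GrFusionSpecFlow}, then evaluate the Verlinde coefficient for the typical--typical case directly, with the signs, the $r$-sum and the $\rho$-integral all matching the paper's computation of $\fuscoeff{\Typ{\lambda}}{\Typ{\mu}}{\sfmod{n}{\Typ{\nu}}}$. The only (harmless) difference is that the paper dispatches \eqref{GrFR:AA} and \eqref{GrFR:AT} by invoking unitality of $\ch{\VacMod}$, whereas you recompute them explicitly --- your re-summation of the alternating tail into $\ch{\VacMod}$ via \eqref{ch:Vac}, and the check that the opposite expansion of the denominator yields the coresolution instead, are a worthwhile elaboration of the paper's remark that both completions agree.
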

\begin{proof}
By \lemref{lem:GrFusionSpecFlow}, we may assume that $\ell = m = 0$.  Then, \eqref{GrFR:AA} and \eqref{GrFR:AT} follow from the vacuum character $\ch{\VacMod}$ being the unit of the Verlinde product.  We therefore turn to the rule \eqref{GrFR:TT} and compute the coefficient
\begin{align}
\fuscoeff{\Typ{\lambda}}{\Typ{\mu}}{\sfmod{n}{\Typ{\nu}}} &= \brac{-1}^{n+1} \sum_{r \in \ZZ} \ee^{-2 \pi \ii \brac{\lambda + \mu - \nu} r} \int_{\RR / \ZZ} \brac{\ee^{2 \pi \ii \brac{n+1} \rho} - \ee^{2 \pi \ii n \rho}} \: \dd \rho \notag \\
&= \brac{\delta_{n=0} + \delta_{n=-1}} \func{\delta}{\nu = \lambda + \mu \bmod{1}}.
\end{align}
The result now follows by substituting into \eqref{eq:Verlinde1}.
\end{proof}

Because the multiplicities appearing in the Verlinde product rules are non-negative integers, the product $\Grfuse$ endows the (completion of the) $\ZZ$-span of the standard characters with a ring structure.  We call this ring the Verlinde ring.  The following assumption and conjecture are now very plausible:
\begin{conj} \label{conj:rigid}
Let $\fuse$ denote the fusion product on the $\ZZ$-span of the indecomposable $\Ghost$-modules (where addition is direct sum).  We assume that fusing with any given $\Ghost$-module defines an exact functor from this fusion ring to itself, hence that the fusion product descends to a well-defined product $\Grfuse$ on the Grothendieck group:
\begin{equation}
\Gr{\mathcal{M}} \Grfuse \Gr{\mathcal{N}} = \Gr{\mathcal{M} \fuse \mathcal{N}}.
\end{equation}
We conjecture that the product on the resulting Grothendieck ring may be identified with the Verlinde product under the group isomorphism $\Gr{\mathcal{M}} \mapsto \ch{\mathcal{M}}$.  In other words, we conjecture that this constitutes an isomorphism between the Verlinde and Grothendieck fusion rings.
\end{conj}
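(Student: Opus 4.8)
The plan is to split \conjref{conj:rigid} into a \emph{structural} part --- that fusing with a fixed module is exact, so that $\fuse$ descends to a well-defined, associative and commutative product on the Grothendieck group --- and a \emph{computational} part --- that the descended product agrees on generators with the Verlinde product of \eqref{eq:Verlinde}. The structural part is the genuinely open ingredient; the computational part I expect to be completable, and it proceeds in three steps: reduce to finitely many fusion products via spectral flow, compute those, and match against \thmref{thm:GrFusion}.

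\emph{Reduction.} The modules $\sfmod{\ell}{\VacMod}$ should be simple currents for $\fuse$, with $\sfmod{\ell}{\VacMod} \fuse \sfmod{m}{\VacMod} \cong \sfmod{\ell+m}{\VacMod}$ and $\sfmod{\ell}{\mathcal{M}} \cong \sfmod{\ell}{\VacMod} \fuse \mathcal{M}$ for every $\Ghost$-module $\mathcal{M}$; this is the fusion-side analogue of \lemref{lem:GrFusionSpecFlow}. Granting it, every fusion product of standard modules is determined up to an overall spectral flow by the three base cases $\VacMod \fuse \VacMod$, $\VacMod \fuse \Typ{\mu}$ and $\Typ{\lambda} \fuse \Typ{\mu}$. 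The first two are immediate: $\VacMod$ is the ghost \voa{} and hence the fusion unit, so they reproduce \eqref{GrFR:AA} and \eqref{GrFR:AT} on the nose.

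\emph{Computation and matching.} The substantive case is $\Typ{\lambda} \fuse \Typ{\mu}$, attacked with the \NGK{} algorithm: realise the fusion product as a suitable quotient of $\Typ{\lambda} \otimes \Typ{\mu}$, identify the $\Ghost$-action on the lowest few graded subspaces, and read off the composition factors. The expected outcome --- confirmed in \secref{sec:Fusion} by \thmref{thm:FusionTT} --- is a length-two indecomposable (a staggered module) with composition factors $\Typ{\lambda+\mu}$ and $\sfmod{-1}{\Typ{\lambda+\mu}}$, so that $\Gr{\Typ{\lambda} \fuse \Typ{\mu}} = \ch{\Typ{\lambda+\mu}} + \ch{\sfmod{-1}{\Typ{\lambda+\mu}}}$, which is exactly \eqref{GrFR:TT} at $\ell = m = 0$. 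Combined with the reduction step, this gives all of \eqref{GrFR}. To extract the ring isomorphism one then verifies that $\Gr{\mathcal{M}} \mapsto \ch{\mathcal{M}}$ is a well-defined group isomorphism onto the $\ZZ$-span of the standard characters --- that is, that the standard and simple characters of \secref{sec:Char} are linearly independent as distributions and that every module has a composition series built from spectral flows of $\VacMod$, $\conjmod{\VacMod}$ and the $\Typ{\lambda}$ --- whereupon the computed products and those of \thmref{thm:GrFusion} agree on a generating set, forcing the Verlinde and Grothendieck rings to be isomorphic.

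The hard part is the structural input. One needs $\mathcal{N} \mapsto \mathcal{M} \fuse \mathcal{N}$ to be (at least right-) exact, so that $\fuse$ descends to the Grothendieck group at all, together with associativity and commutativity of $\fuse$ on the relevant module category. In the $C_2$-cofinite setting this would come from Huang-Lepowsky-Zhang tensor theory, but the ghost \voa{} does not satisfy those hypotheses \cite{HuaTen13}, so a direct argument is required --- presumably by controlling the \NGK{} construction uniformly across all modules --- and this is precisely where the ``trivial special subspace'' difficulty of \secref{sec:Fusion} bites hardest. Compounding this, the category on which the isomorphism is meant to live is itself only conjectural (\conjref{conj:Proj}): establishing that it is closed under $\fuse$ is a prerequisite for the statement even to be meaningful.
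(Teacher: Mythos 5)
The statement you are trying to prove is \conjref{conj:rigid}, which the paper deliberately leaves as a conjecture: the authors offer no proof, only the remark that the analogous statement holds for rational \cfts{} \cite{HuaVer05}, and they \emph{assume} it for the remainder of the paper. So there is no proof in the paper to compare against, and your proposal --- which candidly isolates the exactness/descent question as ``the genuinely open ingredient'' --- is correctly calibrated on that point. The structural obstruction you identify (no Huang--Lepowsky--Zhang tensor theory available, trivial special subspaces, and the prerequisite that the physically relevant category of \conjref{conj:Proj} be closed under fusion) is exactly where the difficulty lies.

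However, your ``computational'' part has a circularity problem relative to what \secref{sec:Fusion} actually establishes. The \NGK{} analysis there does \emph{not} independently determine the composition factors of $\Typ{\lambda} \fuse \Typ{-\lambda}$: the paper first uses \thmref{thm:GrFusion} together with \eqnref{eq:DefCharFusion} --- that is, it \emph{assumes} \conjref{conj:rigid} --- to obtain the character \eqref{GrFR:TT'}, and only then uses \NGK{} quotients to see how the already-known factors are glued (the identification of $w$ with $\gamma_{-1} x^+$, for instance, is argued ``by considering multiplicities from the character of the fusion product''). The quotients computed there annihilate entire composition factors (e.g.\ $\sfmod{-2}{\VacMod}$), and the paper notes that the quotient large enough to see everything has infinite-dimensional charge subspaces, so one cannot simply ``read off the composition factors'' from \NGK{} as your plan requires. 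Two smaller inaccuracies: for $\lambda + \mu \notin \ZZ$ the expected fusion product is the \emph{direct sum} $\Typ{\lambda+\mu} \oplus \sfmod{-1}{\Typ{\lambda+\mu}}$ of simples (\corref{cor:Fusion}), not a length-two staggered indecomposable, and \thmref{thm:FusionTT} confirms only the case $\mu = -\lambda$, where the product is a staggered module with \emph{four} composition factors $\VacMod$, $2\,\sfmod{-1}{\VacMod}$, $\sfmod{-2}{\VacMod}$; this does not change the Grothendieck image, but it misstates what the paper proves. Finally, your reduction step rests on $\sfmod{\ell}{\mathcal{M}} \cong \sfmod{\ell}{\VacMod} \fuse \mathcal{M}$, which is itself the unproven \conjref{conj:fuse}, so even granting exactness your outline replaces one conjecture by two others rather than closing the argument.
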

\noindent This conjecture holds for rational \cfts{} \cite{HuaVer05}.  We will assume from now on that this conjecture holds for the $c=2$ bosonic ghost system, so we will use Verlinde and Grothendieck fusion ring terminology interchangeably.  This amounts to supposing that the Verlinde formula \eqref{eq:Verlinde} computes the character of the fusion product:
\begin{equation} \label{eq:DefCharFusion}
\ch{\mathcal{M} \fuse \mathcal{N}} = \ch{\mathcal{M}} \Grfuse \ch{\mathcal{N}}.
\end{equation}

Of course, if the \rhs{} of a Grothendieck fusion rule is the character of a simple module, it may be lifted to a genuine fusion rule.  More generally, if a Grothendieck product is a sum of characters of modules among which no non-trivial extensions are possible, then we may again lift the result to a genuine fusion rule.  In the latter case, consideration of charges and conformal weights modulo $1$ is often sufficient to rule out indecomposable extensions.  Such considerations lead us to the following fusion rules:
\begin{cor} \label{cor:Fusion}
Assuming \conjref{conj:rigid}, the Verlinde product rules of \thmref{thm:GrFusion} imply the following fusion rules:
\begin{subequations} \label{FR}
\begin{align}
\sfmod{\ell}{\VacMod} \fuse \sfmod{m}{\VacMod} &= \sfmod{\ell+m}{\VacMod}, \label{FR:AA} \\
\sfmod{\ell}{\VacMod} \fuse \sfmod{m}{\Typ{\mu}} &= \sfmod{\ell+m}{\Typ{\mu}} & &\text{(\(\mu \notin \ZZ\)),} \label{FR:AT} \\
\sfmod{\ell}{\Typ{\lambda}} \fuse \sfmod{m}{\Typ{\mu}} &= \sfmod{\ell+m}{\Typ{\lambda + \mu}} \oplus \sfmod{\ell+m-1}{\Typ{\lambda + \mu}} & &\text{(\(\lambda + \mu \notin \ZZ\)).} \label{FR:TT}
\end{align}
\end{subequations}
\end{cor}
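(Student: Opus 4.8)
The plan is to lift each Grothendieck (Verlinde) product rule of \thmref{thm:GrFusion} to an honest statement about modules, using \conjref{conj:rigid} in the form \eqref{eq:DefCharFusion}.  The first step is to observe that the characters of the simple $\Ghost$-modules in the relevant category --- namely the $\sfmod{\ell}{\VacMod}$ and the $\sfmod{\ell}{\Typ{\lambda}}$ with $[\lambda] \neq [0]$ --- are linearly independent, which is immediate from the explicit formulae of \secref{sec:Char} and \secref{sec:Mod} (the spectral flow index is recorded by the support of the delta distributions and $[\lambda]$ by the corresponding Fourier coefficients).  Since characters are additive along short exact sequences, the character of any module thus determines its composition factors together with their multiplicities.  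Granting \conjref{conj:rigid}, the composition factors of each fusion product on the \lhs{} of \eqref{FR} are therefore exactly those read off from the right-hand sides of \eqref{GrFR}, and the only remaining task is to rule out non-split extensions among these factors.

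For \eqref{FR:AA} and \eqref{FR:AT} there is nothing to rule out.  The right-hand sides of \eqref{GrFR:AA} and \eqref{GrFR:AT} are single simple characters: $\sfmod{\ell+m}{\VacMod}$ is simple because spectral flow is an autoequivalence and $\VacMod$ is simple, and $\sfmod{\ell+m}{\Typ{\mu}}$ is simple for $\mu \notin \ZZ$ by \propref{prop:FinGhReps}.  A module with a single composition factor is simple, so the fusion products coincide with these modules.

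For \eqref{FR:TT}, write $L = \ell+m$ and $\nu = \lambda+\mu$, with $\nu \notin \ZZ$ by hypothesis.  By \eqref{GrFR:TT} the fusion product $\sfmod{\ell}{\Typ{\lambda}} \fuse \sfmod{m}{\Typ{\mu}}$ has the two non-isomorphic simple composition factors $\sfmod{L}{\Typ{\nu}}$ and $\sfmod{L-1}{\Typ{\nu}}$, each with multiplicity one, and the point is to show that these cannot both be composition factors of an indecomposable module.  The ground states of $\Typ{\nu}$ have conformal weight $0$ (visible from \eqref{ch:T}), so \eqref{eq:FlowedWeights} places every conformal weight occurring in $\sfmod{k}{\Typ{\nu}}$ in the coset $k\nu + \ZZ$; as $\nu \notin \ZZ$, the modules $\sfmod{L}{\Typ{\nu}}$ and $\sfmod{L-1}{\Typ{\nu}}$ therefore have conformal weights in distinct cosets of $\ZZ$.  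Since the ghost \voa{} is $\ZZ$-graded ($h_{\beta}=1$, $h_{\gamma}=0$), any module on which $L_0$ acts locally finitely decomposes as a direct sum of submodules indexed by the conformal weight modulo $\ZZ$; in particular an indecomposable such module has all of its composition factors in a single coset.  This forces the fusion product to be $\sfmod{L}{\Typ{\nu}} \oplus \sfmod{L-1}{\Typ{\nu}}$, which is \eqref{FR:TT}.

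The step that requires the most care is the last one, ruling out a non-split extension between $\sfmod{L}{\Typ{\nu}}$ and $\sfmod{L-1}{\Typ{\nu}}$; it is handled by the conformal-weight-modulo-$\ZZ$ argument above, which works precisely because of the spectral-flow mismatch forced by $\lambda+\mu \notin \ZZ$.  This also explains why the non-integrality hypotheses are needed: when they fail, $\VacMod$ and $\conjmod{\VacMod}$ do admit non-split extensions (cf.\ \eqref{es:Atyp}) and the true fusion products become the staggered modules of \thmref{thm:FusionTT}, to be analysed with the \NGK{} algorithm rather than by Grothendieck-ring considerations.  Beyond this, the only non-formal ingredient is \conjref{conj:rigid} itself, which is assumed throughout the section.
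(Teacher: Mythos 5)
Your proposal is correct and follows essentially the same route as the paper, which justifies the corollary only by the remark preceding it: lift a Grothendieck rule directly when its right-hand side is a single simple character, and for \eqref{FR:TT} rule out a non-split extension between $\sfmod{\ell+m}{\Typ{\lambda+\mu}}$ and $\sfmod{\ell+m-1}{\Typ{\lambda+\mu}}$ by considering conformal weights modulo $1$. Your computation that the weights of $\sfmod{k}{\Typ{\nu}}$ lie in $k\nu + \ZZ$, so that the two factors live in distinct cosets precisely when $\lambda+\mu\notin\ZZ$, is exactly the consideration the paper has in mind.
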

\noindent We remark that fusing the module $\sfmod{\ell}{\VacMod}$ with its conjugate $\conjmod{\sfmod{\ell}{\VacMod}} = \sfmod{-\ell}{\conjmod{\VacMod}} = \sfmod{-\ell-1}{\VacMod}$ does not give back the vacuum, but rather its conjugate $\conjmod{\VacMod} = \sfmod{-1}{\VacMod}$.  This is consistent with the one-point function of the identity field vanishing and that of its conjugate $\func{\omega}{z}$ being non-vanishing.

\section{Fusion} \label{sec:Fusion}

In this section, we compute the remaining fusion product involving simple modules, that of the typicals $\Typ{\lambda}$ and $\Typ{-\lambda}$ (so $[\lambda] \neq [0]$).  \thmref{thm:GrFusion} and \eqnref{eq:DefCharFusion} give the character of this fusion product if we assume (and we do) that \conjref{conj:rigid} holds:
\begin{equation} \label{GrFR:TT'}
\ch{\Typ{\lambda} \fuse \Typ{-\lambda}} = \ch{\sfmod{-2}{\VacMod}} + 2 \: \ch{\sfmod{-1}{\VacMod}} + \ch{\VacMod}.
\end{equation}
We illustrate the (convex hull of the) weights of the composition factors of this fusion product in \figref{fig:Stag} (left); here, the charge increases horizontally from right to left and the conformal weight increases from top to bottom.  To deduce the module structure, we turn to the \NGK{} fusion algorithm \cite{NahQua94,GabInd96}.  This constructs (an algebraic completion of) the fusion product of two modules as a quotient of their tensor product (over $\CC$) \cite{GabFus94b}, the action on the product being characterised by the following master equations:
\begin{subequations} \label{eq:Master}
\begin{align}
\coprod{\beta_n} &= \sum_{m=0}^n \binom{n}{m} \beta_m \otimes \wun + \wun \otimes \beta_n & &\text{(\(n \ge 0\)),} \label{eq:MB1} \\
\coprod{\beta_{-n}} &= \sum_{m=0}^{\infty} \binom{m+n-1}{n-1} \brac{-1}^m \beta_m \otimes \wun + \wun \otimes \beta_n & &\text{(\(n \ge 1\)),} \label{eq:MB2} \\
\beta_{-n} \otimes \wun &= \sum_{m=n}^{\infty} \binom{m-1}{n-1} \coprod{\beta_{-m}} + \brac{-1}^{n-1} \sum_{m=0}^{\infty} \binom{m+n-1}{n-1} \wun \otimes \beta_m & &\text{(\(n \ge 1\)),} \label{eq:MB3} \\
\coprod{\gamma_n} &= \sum_{m=1}^n \binom{n-1}{m-1} \gamma_m \otimes \wun + \wun \otimes \gamma_n & &\text{(\(n \ge 1\)),} \label{eq:MG1} \\
\coprod{\gamma_{-n}} &= \sum_{m=1}^{\infty} \binom{m+n-1}{n} \brac{-1}^{m-1} \gamma_m \otimes \wun + \wun \otimes \gamma_n & &\text{(\(n \ge 0\)),} \label{eq:MG2} \\
\gamma_{-n} \otimes \wun &= \sum_{m=n}^{\infty} \binom{m}{n} \coprod{\gamma_{-m}} + \brac{-1}^n \sum_{m=1}^{\infty} \binom{m+n-1}{n} \wun \otimes \gamma_m & &\text{(\(n \ge 0\)).} \label{eq:MG3}
\end{align}
\end{subequations}
We remark that imposing \eqref{eq:MB3} and \eqref{eq:MG3} as identities that act upon the tensor product of two modules amounts to working in the quotient of the tensor product that realises the fusion product.  

\begin{figure}
\begin{tikzpicture}[vec/.style={circle,draw=black,fill=black,inner sep=2pt,minimum size=5pt}]
\path (0,0) node[vec] (xm) {} + (0.2,0.3) node {$x^-$};
\path (-2,1) node[vec] (xp) {} + (0.3,0.2) node {$x^+$};
\path (-1,0.1) node[vec] (y) {} + (0.1,0.3) node {$y$};
\path (-1,-0.1) node[vec] (w) {} + (-0.1,-0.3) node {$w$};
\draw[] (3,0) -- (xm) -- (-3,-3)
        (-3,2) -- (xp) -- (0,-3)
        (-4,0.1) -- (y) -- (2,-2.85)
        (-4,-0.1) -- (w) -- (2,-3.1);
\end{tikzpicture}
\hspace{0.05\textwidth}
\begin{tikzpicture}[auto,>=latex,vec/.style={circle,draw=black,fill=black,inner sep=2pt,minimum size=5pt}]
\path (0,0) node[vec] (xm) {}
      (-2,1) node[vec] (xp) {}
      (-1,0.1) node[vec] (y) {}
      (-1,-0.1) node[vec] (w) {};
\draw[gray] (3,0) -- (xm) -- (-3,-3)
        (-3,2) -- (xp) -- (0,-3)
        (-4,0.1) -- (y) -- (2,-2.85)
        (-4,-0.1) -- (w) -- (2,-3.1);
\draw[->,thick] (y) to[out=105,in=-15] node[swap] {$\beta_1$} (xp);
\draw[->,thick] (xp) to[out=-75,in=165] node[swap] {$\gamma_{-1}$} (w);
\draw[->,thick] (y) to[out=15,in=150] node[] {$\gamma_0$} (xm);
\draw[->,thick] (xm) to[out=-150,in=-15] node[] {$\beta_0$} (w);
\end{tikzpicture}
\caption{The structure of the fusion product $\Typ{\lambda} \fuse \Typ{-\lambda} = \sfmod{-1}{\Stag}$.  At left, the four composition factors are visualised with one vector of each marked.  The weights $(j,h)$ of the vectors $x^-$, $w$, $y$ and $x^+$ are $(0,0)$, $(1,0)$, $(1,0)$ and $(2,-1)$, respectively.  At right, the composition factors are ``glued'' together into an indecomposable module through the indicated action of the algebra modes.} \label{fig:Stag}
\end{figure}
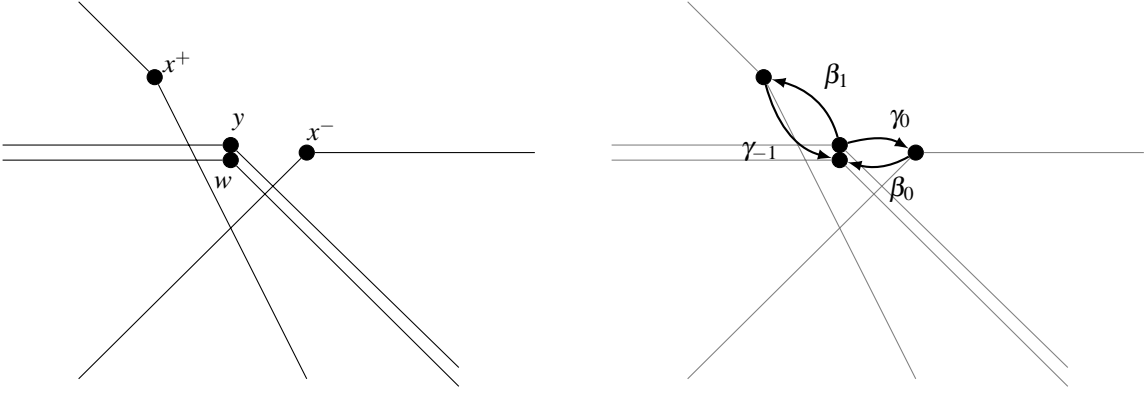

The fusion product itself will not be constructed explicitly, but we will analyse certain quotients upon which a chosen subalgebra of products of modes acts trivially.  One subalgebra that is traditionally relevant to fusion computations is that generated by the $\beta_{-m}$ and $\gamma_{-n}$, with $m \ge 1$ and $n \ge 0$; quotienting by its action defines the \emph{special subspace} \cite{NahQua94}.  Unfortunately, the typical modules $\Typ{\lambda}$ have trivial special subspaces because $\gamma_0$ acts surjectively.  The $\sfmod{\ell}{\Typ{\lambda}}$ likewise have trivial special subspaces.

The standard methodology therefore needs refining.  We introduce a (commutative) subalgebra $\alg{U}$ of the \uea{} of $\Ghost$ by
\begin{equation}
\alg{U} = \CC[\beta_{-1}, \beta_{-2}, \ldots, \gamma_{-1}, \gamma_{-2}, \ldots]
\end{equation}
and claim that
\begin{equation} \label{eq:PreFusProd}
\frac{\Typ{\lambda} \fuse \Typ{-\lambda}}{\func{\alg{U}}{\Typ{\lambda} \fuse \Typ{-\lambda}}} \subseteq \frac{\Typ{\lambda}}{\func{\alg{U}}{\Typ{\lambda}}} \otimes \frac{\Typ{-\lambda}}{\func{\alg{U}}{\Typ{-\lambda}}},
\end{equation}
as vector spaces.\footnote{Being vector spaces, one may also regard the \lhs{} of \eqref{eq:PreFusProd} as a quotient of the \rhs{}.  We present \eqref{eq:PreFusProd} as an inclusion as this is how we will prove it.  The equivalent point of view, where we instead regard the \lhs{} as a quotient, is used when actually computing a fusion product.  Then, one first characterises the \lhs{} by determining elements, called \emph{spurious states}, of the \rhs{} which must be set to $0$ for the master equations \eqref{eq:Master} to have a well-defined action.}  Because we impose \eqref{eq:MB3} and \eqref{eq:MG3} as identities on $\Typ{\lambda} \otimes \Typ{-\lambda}$, we may identify the \lhs{} with the corresponding tensor product quotient:
\begin{equation} \label{eq:FusProd}
\frac{\Typ{\lambda} \fuse \Typ{-\lambda}}{\func{\alg{U}}{\Typ{\lambda} \fuse \Typ{-\lambda}}} \cong \frac{\Typ{\lambda} \otimes \Typ{-\lambda}}{\func{\left\langle \eqref{eq:MB3}, \eqref{eq:MG3}, \coprod{\alg{U}} \right\rangle}{\Typ{\lambda} \otimes \Typ{-\lambda}}} \subseteq \frac{\Typ{\lambda}}{\func{\alg{U}}{\Typ{\lambda}}} \otimes \frac{\Typ{-\lambda}}{\func{\alg{U}}{\Typ{-\lambda}}},
\end{equation}
It is this inclusion of quotients of tensor products that we shall actually prove.

The proof amounts to showing that any $u \otimes v$, with $u \in \Typ{\lambda}$ and $v \in \Typ{-\lambda}$, representing the \lhs{} may be written as a linear combination of the $u_j \otimes v_k$ that represent the \rhs{}.  Here, the $u_j \in \Typ{\lambda}$ and $v_k \in \Typ{-\lambda}$ are the parabolic \hwvs{} that restrict to the basis vectors of the $\FinGhost$-modules $\FinTyp{\lambda}$ and $\FinTyp{-\lambda}$, respectively (see \propref{prop:FinGhReps}).  As these modules are simple, we may parametrise them so that
\begin{equation}
\begin{aligned}
\beta_0 u_j &= j u_{j+1}, & \gamma_0 u_j &= u_{j-1}, \\
\beta_0 v_k &= k v_{k+1}, & \gamma_0 v_k &= v_{k-1}
\end{aligned}
\qquad \Ra \qquad 
\begin{aligned}
J_0 u_j &= j u_j, & L_0 u_j &= 0, \\
J_0 v_k &= k v_k, & L_0 v_k &= 0
\end{aligned}
\qquad 
\begin{aligned}
\text{(\(j \in \ZZ + \lambda\)),} \\
\text{(\(k \in \ZZ - \lambda\)).}
\end{aligned}
\end{equation}
The proof proceeds in four steps, starting with some arbitrary $u \otimes v \in \Typ{\lambda} \otimes \Typ{-\lambda}$ and iterating each step on each of the terms, which we shall typically also denote by $u \otimes v$, obtained in the previous step:
\begin{enumerate}
\item If $u = \beta_{-n} u'$, with $n \ge 1$, then use \eqref{eq:MB3} to write $u \otimes v = \brac{-1}^{n-1} \sum_{m=0}^{\infty} \binom{m+n-1}{n-1} u' \otimes \beta_m v$.  Iterate this repeatedly until the result is a finite linear combination of vectors of the form $u \otimes v$, where each $u$ cannot be written as $\beta_{-n} u'$, with $n \ge 1$.  Termination is guaranteed as the conformal weight of the first factor decreases strictly with each iteration. \label{it:lb}
\item If, in any of these $u \otimes v$, we have $u = \gamma_{-n} u'$, with $n \ge 1$, then use \eqref{eq:MG3} to write each as the linear combination $\brac{-1}^n \sum_{m=1}^{\infty} \binom{m+n-1}{n} u \otimes \gamma_m v$.  Simplifying, and repeating for all terms, we arrive at a finite linear combination of vectors of the form $u_j \otimes v$. \label{it:lg}
\item If $v = \beta_{-n} v'$, with $n \ge 1$, then use \eqref{eq:MB2} and $\fcoprod{\beta_{-n}}{u_j \otimes v'} = 0$ to obtain $u_j \otimes v = -j u_{j+1} \otimes v'$.  Repeat. \label{it:rb}
\item Finally, if $v = \gamma_{-n} v'$, with $n \ge 1$, then use \eqref{eq:MG2} and $\fcoprod{\gamma_{-n}}{u_j \otimes v'} = 0$ to obtain $u_j \otimes v = 0$.  The final result is now a finite linear combination of vectors of the form $u_j \otimes v_k$, completing the proof. \label{it:rg}
\end{enumerate}
In principle, we could also apply \ref{it:lg} when $u = \gamma_0 u'$.  However, all vectors $u \in \Typ{\lambda}$ have this form, so repeating this step would lead to an infinite regress.  Instead, we apply \eqnDref{eq:MG2}{eq:MG3}, both for $n=0$, to reduce the basis $\set{u_j \otimes v_k \st j \in \ZZ + \lambda, \ k \in \ZZ - \lambda}$ of the \rhs{} of \eqref{eq:FusProd}, giving an analogue of a spurious state:
\begin{equation} \label{eq:Spur}
u_{j-1} \otimes v_{k+1} = \gamma_0 u_j \otimes v_{k+1} = \fcoprod{\gamma_0}{u_j \otimes v_{k+1}} = u_j \otimes \gamma_0 v_{k+1} = u_j \otimes v_k.
\end{equation}
We therefore propose that a basis for the \lhs{} is $\set{u_{\lambda} \otimes v_k \st k \in \ZZ - \lambda}$.

The action of $\beta_0$ and $\gamma_0$ on these basis vectors is easily computed using \eqref{eq:MB1}, \eqref{eq:MG2} and \eqref{eq:Spur}:
\begin{equation} \label{eq:Action}
\fcoprod{\beta_0}{u_{\lambda} \otimes v_k} = \lambda u_{\lambda+1} \otimes v_k + k u_{\lambda} \otimes v_{k+1} = \brac{\lambda + k} u_{\lambda} \otimes v_{k+1}, \qquad 
\fcoprod{\gamma_0}{u_{\lambda} \otimes v_k} = u_{\lambda} \otimes v_{k-1}.
\end{equation}
This is the same action as that of $\beta_0$ and $\gamma_0$ on the quotient $\Typ{0}^+ / \func{\alg{U}}{\Typ{0}^+}$ (which coincides with that on the $\FinGhost$-module $\FinTyp{0}^+$ appearing in \propref{prop:FinGhReps}):  $\gamma_0$ acts surjectively while $\beta_0$ annihilates the vector $u_{-k} \otimes v_k$ of weight $(0,0)$.  We therefore conclude that the fusion product $\Typ{\lambda} \fuse \Typ{-\lambda}$ has a quotient isomorphic to $\Typ{0}^+$.  This accounts for the composition factor $\VacMod$ and one of the $\sfmod{-1}{\VacMod}$ factors appearing in \eqref{GrFR:TT'}.  It also verifies the arrow labelled by $\gamma_0$ in \figref{fig:Stag} (right).

Because $\gamma_{-1} \in \alg{U}$ acts surjectively on the composition factor $\sfmod{-2}{\VacMod}$, every vector associated to this factor is set to $0$ in the fusion quotient that we have computed.  It therefore remains to account for the other composition factor $\sfmod{-1}{\VacMod}$ in \eqref{GrFR:TT'}.  As no vectors associated to this factor are observed in the fusion quotient, they must be in the image of $\alg{U}$.  In particular, the vector of weight $(1,0)$ that is labelled by $w$ in \figref{fig:Stag} (left) must be in $\im \alg{U}$.  Referring to this figure (or considering multiplicities from the character \eqref{GrFR:TT'} of the fusion product), we see that the only way this can happen is if $w$ is a non-zero multiple of $\gamma_{-1} x^+$.  This conclusion therefore verifies the arrow labelled by $\gamma_{-1}$ drawn in \figref{fig:Stag} (right).

We remark that if the basis proposed after \eqref{eq:Spur} were incorrect, meaning that there were further spurious states to find, then we would have to set some of the elements of $\Typ{0}^+ / \func{\alg{U}}{\Typ{0}^+}$ to $0$.  However, this is impossible because \figref{fig:Stag} makes it clear that there cannot be any $\alg{U}$-descendants beyond those we have accounted for.  The basis is therefore correct.

To obtain the remaining arrows in \figref{fig:Stag} (right), we change the subalgebra by whose action we quotient.  Let $\alg{U}'$ denote the (commutative) subalgebra
\begin{equation}
\alg{U}' = \CC[\beta_0, \beta_{-1}, \beta_{-2}, \ldots, \gamma_{-2}, \gamma_{-3}, \gamma_{-4}, \ldots].
\end{equation}
The claim is now that
\begin{equation} \label{eq:FusProd'}
\frac{\Typ{\lambda} \fuse \Typ{-\lambda}}{\func{\alg{U}'}{\Typ{\lambda} \fuse \Typ{-\lambda}}} \subseteq \frac{\Typ{\lambda}}{\func{\alg{U}}{\Typ{\lambda}}} \otimes \frac{\Typ{-\lambda}}{\func{\alg{U}'}{\Typ{-\lambda}}};
\end{equation}
that is, that any $u \otimes v \in \Typ{\lambda} \otimes \Typ{-\lambda}$ may be reduced to a linear combination of vectors of the form $u_j \otimes \gamma_{-1}^m v_k$.  The proof again proceeds as above, with the same proviso regarding \eqnDref{eq:MB3}{eq:MG3}, though \ref{it:lg} and \ref{it:rg} are now only performed when $n \ge 2$.  Moreover, we need an additional step after \ref{it:lg}:
\begin{enumerate}[label=(\arabic*'),start=2]
\item As $u = \gamma_{-1}^{\ell} u_j$, we use \eqref{eq:MG2} and \eqref{eq:MG3} to write $\gamma_{-1}^{\ell} u_j \otimes v = \gamma_{-1}^{\ell-1} u_j \otimes \gamma_{-1} v - \sum_{m=1}^{\infty} m \gamma_{-1}^{\ell-1} u_j \otimes \gamma_m v$, when $\ell > 0$.  Repeat until we have a finite linear combination of vectors of the form $u_j \otimes v$.
\end{enumerate}
We may again reduce the basis for the \rhs{} of \eqref{eq:FusProd'} by computing analogues of spurious states:
\begin{subequations}
\begin{gather}
0 = \fcoprod{\beta_0}{v_j \otimes \gamma_{-1}^m w_k} = j v_{j+1} \otimes \gamma_{-1}^m w_k + k v_j \otimes \gamma_{-1}^m w_{k+1}, \label{eq:Spur'} \\
v_j \otimes \gamma_{-1}^m w_k = \gamma_0 v_{j+1} \otimes \gamma_{-1}^m w_k = \func{\brac{\coprod{\gamma_0} + \coprod{\gamma_{-1}}}}{v_{j+1} \otimes \gamma_{-1}^m w_k} = v_{j+1} \otimes \gamma_{-1}^m w_{k-1} + v_{j+1} \otimes \gamma_{-1}^{m+1} w_k. \label{eq:Spur''}
\end{gather}
\end{subequations}
Applying \eqref{eq:Spur''} repeatedly lets us reduce the power of $\gamma_{-1}$ to $0$, then \eqref{eq:Spur'} lets us fix $j = \lambda$.  Our proposed basis is therefore $\set{u_{\lambda} \otimes v_k \st k \in \ZZ - \lambda}$.  We now compute
\begin{subequations}
\begin{align}
\fcoprod{\beta_1}{v_{\lambda} \otimes w_k} &= \lambda v_{\lambda+1} \otimes w_k = -k v_{\lambda} \otimes w_{k+1}, \\
\fcoprod{\gamma_{-1}}{v_{\lambda} \otimes w_k} &= v_{\lambda} \otimes \gamma_{-1} w_k = v_{\lambda-1} \otimes w_k - v_{\lambda} \otimes w_{k-1} = -\frac{\lambda+k-2}{k-1} v_{\lambda} \otimes w_{k-1}
\end{align}
\end{subequations}
and a little work shows that this action matches that on the quotient $\sfmod{-1}{\Typ{0}^-} / \func{\alg{U}'}{\sfmod{-1}{\Typ{0}^-}}$.  This verifies the arrow labelled by $\beta_1$ in \figref{fig:Stag} (right) and that labelled by $\beta_0$ is obtained by noting that the missing vector of weight $(1,0)$ can only be a $\alg{U}'$-descendant of the vector of weight $(0,0)$.

It remains only to determine if there are any ambiguities in the structure that we have uncovered for this fusion product $\Typ{\lambda} \fuse \Typ{-\lambda}$.  The analysis amounts to considering the four vectors labelled in \figref{fig:Stag} (left):
\begin{itemize}
\item First, choose $x^+ \neq 0$ of weight $(2,-1)$.
\item Then, define $w = \gamma_{-1} x^+$ so that $w$ has weight $(1,0)$.  Let $y$ and $w$ be linearly independent in this weight space.
\item Fix $x^-$, of weight $(0,0)$, by requiring that $\beta_0 x^- = w$.
\end{itemize}
We will fix the normalisation of $y$ shortly.  For now, we note that
\begin{equation}
J_0 y = \brac{\gamma_0 \beta_0 + \gamma_{-1} \beta_1} y = y + \brac{\beta_0 \gamma_0 + \gamma_{-1} \beta_1} y, \qquad 
L_0 y = -\gamma_{-1} \beta_1 y \neq 0,
\end{equation}
so that $\brac{J_0 - \wun} y = \brac{\beta_0 \gamma_0 + \gamma_{-1} \beta_1} y$ and $L_0 y$ are proportional to $w$ (see \figref{fig:Stag}).  The Virasoro zero mode therefore has a Jordan block of rank $2$ indicating that $\Typ{\lambda} \fuse \Typ{-\lambda}$ is a staggered module in the sense of \cite{RidSta09,CreLog13}.  We may now normalise $y$ so that
\begin{itemize}
\item $L_0 y = w$,
\end{itemize}
noting that this fixes $y$ up to adding multiples of $w$.  The structure of the staggered module is then determined by computing $\beta_1 y = b_+ x^+$ and $\gamma_0 y = b_- x^-$, as the constants $b_{\pm}$ are independent of the remaining freedom in choosing $y$.  We find that
\begin{equation}
w = L_0 y = -\gamma_{-1} \beta_1 y = -b_+ \beta_{-1} x^+ = -b_+ w \qquad \Ra \qquad b_+ = -1.
\end{equation}
To compute $b_-$, we note that the coproduct formula $\coprod{J_0} = J_0 \otimes \wun + \wun \otimes J_0$ implies that $J_0$ acts semisimply on the fusion product $\Typ{\lambda} \fuse \Typ{-\lambda}$ because it does on the typical modules.  Thus, we deduce that
\begin{equation}
0 = \brac{J_0 - \wun} y = \brac{\beta_0 \gamma_0 + \gamma_{-1} \beta_1} y = \brac{b_- + b_+} w \qquad \Ra \qquad b_- = -b_+ = 1.
\end{equation}
The analysis of the fusion product is complete and we summarise the result as follows:
\begin{thm} \label{thm:FusionTT}
The ghost fusion rule
\begin{equation}
\Typ{\lambda} \fuse \Typ{-\lambda} = \sfmod{-1}{\Stag}
\end{equation}
defines an indecomposable staggered module $\Stag$ with rank $2$ Jordan blocks that is determined up to isomorphism by either of the following exact sequences or by its Loewy diagram:
\begin{equation}
\begin{aligned}
&\dses{\sfmod{}{\Typ{0}^-}}{}{\Stag}{}{\Typ{0}^-}, \\
&\dses{\Typ{0}^+}{}{\Stag}{}{\sfmod{}{\Typ{0}^+}},
\end{aligned}
\mspace{100mu}
\parbox{0.3\textwidth}{
\begin{tikzpicture}[thick,>=latex,
                    nom/.style={circle,draw=black!20,fill=black!20,inner sep=1pt}
                    ]
\node (top) at (0,1.5) [] {\(\VacMod\)};
\node (left) at (-1.5,0) [] {\(\sfmod{-1}{\VacMod}\)};
\node (right) at (1.5,0) [] {\(\sfmod{}{\VacMod}\).};
\node (bot) at (0,-1.5) [] {\(\VacMod\)};
\node at (0,0) [nom] {\(\Stag\)};
\draw[->] (top) -- (left);
\draw[->] (top) -- (right);
\draw[->] (left) -- (bot);
\draw[->] (right) -- (bot);
\end{tikzpicture}
}
\end{equation}
\end{thm}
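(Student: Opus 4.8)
The plan is to assemble the \NGK{} computations performed above into a structural description of $\Typ{\lambda} \fuse \Typ{-\lambda}$; since $\Stag$ is \emph{defined} by $\Typ{\lambda} \fuse \Typ{-\lambda} = \sfmod{-1}{\Stag}$, this is equivalent to the stated result. The proof breaks into three parts: the composition factors, the four ``glueing'' arrows, and the staggered structure together with a uniqueness statement.

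For the composition factors, \thmref{thm:GrFusion} and \eqnref{eq:DefCharFusion} (granted by \conjref{conj:rigid}) give the character \eqref{GrFR:TT'}, which shows that $\Typ{\lambda} \fuse \Typ{-\lambda}$ has composition factors $\sfmod{-2}{\VacMod}$, two copies of $\sfmod{-1}{\VacMod}$, and $\VacMod$ --- which is, up to a global spectral flow, the multiset of factors displayed in the Loewy diagram of $\Stag$. The extensions are pinned down by the two quotient computations. The inclusion \eqref{eq:FusProd}, established by the four-step reduction via the master equations \eqref{eq:Master} and the spurious-state relation \eqref{eq:Spur}, shows that the $\alg{U}$-quotient of the fusion product has basis $\set{u_{\lambda} \otimes v_k \st k \in \ZZ - \lambda}$ on which $\beta_0$ and $\gamma_0$ act exactly as on $\FinTyp{0}^+$; hence $\Typ{\lambda} \fuse \Typ{-\lambda}$ surjects onto $\Typ{0}^+$, which supplies the $\gamma_0$ arrow, and a weight count (surjectivity of $\gamma_{-1}$ on the $\sfmod{-2}{\VacMod}$ factor) shows the weight-$(1,0)$ vector $w$ must be a nonzero multiple of $\gamma_{-1} x^+$, the $\gamma_{-1}$ arrow. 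The parallel computation for $\alg{U}'$, with its additional reduction step and the relations \eqref{eq:Spur'}--\eqref{eq:Spur''}, identifies the $\alg{U}'$-quotient with the corresponding quotient of $\sfmod{-1}{\Typ{0}^-}$, producing the $\beta_1$ arrow; the $\beta_0$ arrow follows since the remaining weight-$(1,0)$ vector can only be a $\alg{U}'$-descendant of $x^-$. These are the four arrows of the Loewy diagram of $\sfmod{-1}{\Stag}$, and twisting the two quotient statements by $\sfaut$ yields the two asserted short exact sequences for $\Stag$.

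It remains to check that the extension is non-split in the logarithmic sense and to fix its isomorphism type. For the distinguished weight-$(1,0)$ vector $y$, chosen linearly independent of $w = \gamma_{-1}x^+$, one has $L_0 y = -\gamma_{-1}\beta_1 y \neq 0$, so $L_0$ acquires a rank-$2$ Jordan block and the module is staggered in the sense of \cite{RidSta09,CreLog13}. Normalising $y$ by $L_0 y = w$ (which fixes $y$ up to multiples of $w$) and writing $\beta_1 y = b_+ x^+$, $\gamma_0 y = b_- x^-$, the identity $w = L_0 y = -\gamma_{-1}\beta_1 y = -b_+ w$ gives $b_+ = -1$, while semisimplicity of $J_0$ on the fusion product --- immediate from $\coprod{J_0} = J_0 \otimes \wun + \wun \otimes J_0$ and semisimplicity of $J_0$ on $\Typ{\lambda}$ and $\Typ{-\lambda}$ --- forces $0 = (J_0 - \wun) y = (b_- + b_+) w$, hence $b_- = 1$. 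Since $b_+, b_- \neq 0$, the classification of staggered modules \cite{RidSta09} shows that these data determine $\Stag$ up to isomorphism (equivalently, the Loewy diagram or either short exact sequence already does so), completing the proof.

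I expect the genuine obstacle to be the two inclusions \eqref{eq:FusProd} and \eqref{eq:FusProd'}: the reduction steps only show that the proposed bases \emph{span} the respective quotients, and the subtle point is the absence of further spurious states. This is where the computation must be tied back to \eqref{GrFR:TT'}: one argues, as in the remark after \eqref{eq:Spur}, that the weight-space multiplicities of $\Typ{0}^+ / \func{\alg{U}}{\Typ{0}^+}$ already account for all those predicted by the character, so no additional $\alg{U}$-descendants can occur. Granting this, the rest is bookkeeping with the master equations.
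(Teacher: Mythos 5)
Your proposal assembles the section's \NGK{} computations in exactly the way the paper does: composition factors from the Grothendieck fusion character \eqref{GrFR:TT'}, the four Loewy arrows from the $\alg{U}$- and $\alg{U}'$-quotients together with the spurious-state relations, and the staggered structure with $b_+=-1$, $b_-=1$ fixed by $L_0 y = w$ and the semisimplicity of $J_0$. This is correct and is essentially the paper's own argument, including your closing observation that the completeness of the proposed bases is the delicate point and is settled by matching weight-space multiplicities against the character.
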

\noindent In other words, there are no logarithmic couplings \cite{RidPer07} to determine in order to completely specify the isomorphism class of $\Stag$.  We emphasise that this computation assumed \conjref{conj:rigid}.

It is extremely natural to generalise this result to the fusion rules of spectrally-flowed typical modules.  This requires the following standard conjecture, still unproven to the best of our knowledge, that lifts \lemref{lem:GrFusionSpecFlow} to fusion:
\begin{conj} \label{conj:fuse}
The fusion product satisfies
\begin{equation} \label{eq:fuse}
\sfmod{\ell}{\mathcal{M}} \fuse \sfmod{m}{\mathcal{N}} \cong \sfmod{\ell + m}{\mathcal{M} \fuse \mathcal{N}}.
\end{equation}
\end{conj}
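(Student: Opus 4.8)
The plan is to recognise spectral flow as fusion by a simple current, after which \conjref{conj:fuse} becomes a formal consequence of the properties of $\fuse$. Observe first that \eqref{FR:AA} exhibits the family $\set{\sfmod{\ell}{\VacMod} \st \ell \in \ZZ}$ as a group of simple currents under fusion: $\sfmod{\ell}{\VacMod} \fuse \sfmod{-\ell}{\VacMod} = \VacMod$, so each $\sfmod{\ell}{\VacMod}$ is invertible with inverse $\sfmod{-\ell}{\VacMod}$. We continue to assume \conjref{conj:rigid}, so that $\fuse$ is exact, commutative and associative with unit $\VacMod$. The conjecture then reduces to the single claim
\begin{equation} \label{eq:SFisSC}
\sfmod{\ell}{\mathcal{M}} \cong \sfmod{\ell}{\VacMod} \fuse \mathcal{M}, \qquad \ell \in \ZZ,
\end{equation}
for every module $\mathcal{M}$ in the relevant category; granting \eqref{eq:SFisSC} one computes
\begin{equation}
\begin{aligned}
\sfmod{\ell}{\mathcal{M}} \fuse \sfmod{m}{\mathcal{N}}
&\cong \brac{\sfmod{\ell}{\VacMod} \fuse \mathcal{M}} \fuse \brac{\sfmod{m}{\VacMod} \fuse \mathcal{N}} \\
&\cong \brac{\sfmod{\ell}{\VacMod} \fuse \sfmod{m}{\VacMod}} \fuse \brac{\mathcal{M} \fuse \mathcal{N}} \\
&\cong \sfmod{\ell+m}{\VacMod} \fuse \brac{\mathcal{M} \fuse \mathcal{N}} \\
&\cong \sfmod{\ell+m}{\mathcal{M} \fuse \mathcal{N}},
\end{aligned}
\end{equation}
using commutativity and associativity for the first two equivalences, \eqref{FR:AA} for the third, and \eqref{eq:SFisSC} for the first and last.

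To establish \eqref{eq:SFisSC} I would argue as follows. Since the ghost \voa{} contains the integer-moded $\AKMA{gl}{1}$ current $\tfunc{J}{z}$ and the ghost fields have integer conformal weight, $\sfmod{\ell}{\mathcal{M}}$ is exactly the module obtained from $\mathcal{M}$ by the $\Delta$-operator twist associated with the current $\ell J$, in the sense of Li's generalised-twist construction. One then builds, directly from the vertex operator $Y_{\mathcal{M}}$ and this $\Delta$-operator, a nonzero intertwining operator of type $\fuscoeff{\sfmod{\ell}{\VacMod}}{\mathcal{M}}{\sfmod{\ell}{\mathcal{M}}}$; by the universal property defining the fusion product this corresponds to a nonzero $\Ghost$-module homomorphism $\sfmod{\ell}{\VacMod} \fuse \mathcal{M} \to \sfmod{\ell}{\mathcal{M}}$. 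As the image of any generating set of $\mathcal{M}$ already generates $\sfmod{\ell}{\mathcal{M}}$, this homomorphism is surjective, and by \lemref{lem:GrFusionSpecFlow} (taking the vacuum as one of the factors) together with \eqref{eq:DefCharFusion} the modules $\sfmod{\ell}{\VacMod} \fuse \mathcal{M}$ and $\sfmod{\ell}{\mathcal{M}}$ have the same character; since the modules in play have finite-dimensional weight spaces and conformal weights bounded below within each charge sector, a module with vanishing character is zero, so the kernel vanishes and the homomorphism is an isomorphism. A more computational alternative is to run the \NGK{} algorithm on $\sfmod{\ell}{\VacMod} \fuse \mathcal{M}$ as a quotient of $\sfmod{\ell}{\VacMod} \otimes \mathcal{M}$, tracking how the master equations \eqref{eq:Master} interact with the reindexing $\beta_n \mapsto \beta_{n-\ell}$, $\gamma_n \mapsto \gamma_{n+\ell}$ that defines $\sfaut^{\ell}$: since $\coprod{J_0} = J_0 \otimes \wun + \wun \otimes J_0$ forces the $\AKMA{gl}{1}$ charges to add and $\sfaut^{\ell}$ is a purely charge-graded reindexing of $\Ghost$, one expects the $\Ghost$-action on the fusion quotient to be intertwined, after one overall application of $\sfaut^{\ell}$, with the action on $\mathcal{M}$; checking this on the generating modules $\VacMod$, $\conjmod{\VacMod}$ and $\Typ{\lambda}$ and invoking exactness then gives \eqref{eq:SFisSC} in general.

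The hard part is \eqref{eq:SFisSC} itself. The identification of spectral flow with fusion by a simple current is familiar folklore, but making it rigorous here must confront the fact that the ambient module category is neither rational nor $C_2$-cofinite and is not known to carry a braided tensor structure (cf.\ the footnote in \secref{sec:Intro}), so the available $\Delta$-operator theorems do not apply off the shelf; one therefore needs either a genuine extension of that machinery to this non-rational setting or a full \NGK{} computation, the latter running into precisely the trivial-special-subspace difficulty that complicated \secref{sec:Fusion}. One must also verify that $\sfmod{\ell}{\VacMod} \fuse \mathcal{M}$ is finite enough (finite-dimensional weight spaces, conformal weights bounded below in each charge sector) for the character comparison to be decisive --- a point that is genuinely delicate because the ghost spectrum is unbounded below.
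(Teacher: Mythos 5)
The first thing to note is that the paper offers no proof of this statement: it appears as \conjref{conj:fuse} and is explicitly described there as ``still unproven to the best of our knowledge'', so there is no argument of the authors' for yours to match. Judged on its own terms, your proposal is an honest and well-organised reduction, but it is not a proof. The formal manipulation correctly shows that the full conjecture follows from the single claim $\sfmod{\ell}{\VacMod} \fuse \mathcal{M} \cong \sfmod{\ell}{\mathcal{M}}$ together with commutativity, associativity and \eqref{FR:AA}; note, however, that this claim is literally the special case of \conjref{conj:fuse} obtained by taking the first module to be $\VacMod$ and $m=0$, so the reduction relocates rather than removes the difficulty. Moreover, the associativity and commutativity you invoke are themselves not established in this setting --- as the footnote in \secref{sec:Intro} points out, the tensor-category machinery guaranteeing that fusion behaves like an abstract tensor product has not been verified for this module category, and \eqref{FR:AA} itself was obtained only by assuming \conjref{conj:rigid} and lifting a Verlinde product rule out of the Grothendieck ring.

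The genuine gap is therefore the simple-current identity itself, and you candidly say so. Your first route (Li's $\Delta$-operator twist producing an intertwining operator of the appropriate type, then a surjection from the fusion product whose kernel is killed by a character count) is the standard folklore argument, but each step presupposes structure that is unavailable here: the existence and universal property of the fusion product, the nonvanishing of the constructed intertwiner, and the surjectivity of the induced map are asserted rather than proved, and the character comparison leans on \eqnref{eq:DefCharFusion}, i.e.\ on \conjref{conj:rigid} again. Your second route, a direct \NGK{} computation, runs into the trivial-special-subspace problem that \secref{sec:Fusion} had to circumvent module by module; carrying this out uniformly for arbitrary $\mathcal{M}$ is precisely what is not known how to do. In short: the reduction is correct and clarifying, but it does not close the conjecture, which remains open in the paper as well.
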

\begin{cor} \label{cor:FusionTT}
Assuming \conjDref{conj:rigid}{conj:fuse}, \thmref{thm:FusionTT} implies the following fusion rules:
\begin{equation}
\sfmod{\ell}{\Typ{\lambda}} \fuse \sfmod{m}{\Typ{-\lambda}} = \sfmod{\ell + m-1}{\Stag}.
\end{equation}
\end{cor}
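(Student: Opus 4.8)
The plan is to obtain this corollary directly from \thmref{thm:FusionTT} by transporting the fusion rule along the spectral flow functor, using \conjref{conj:fuse} to commute spectral flow past the fusion product. No genuinely new computation is required: the structural content was already established in \secref{sec:Fusion}, and here one only needs to keep careful track of the spectral flow indices.

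First I would recall from \thmref{thm:FusionTT} that, granting \conjref{conj:rigid}, one has $\Typ{\lambda} \fuse \Typ{-\lambda} = \sfmod{-1}{\Stag}$. Next I would apply \conjref{conj:fuse} with $\mathcal{M} = \Typ{\lambda}$ and $\mathcal{N} = \Typ{-\lambda}$ to get
\[
\sfmod{\ell}{\Typ{\lambda}} \fuse \sfmod{m}{\Typ{-\lambda}} \cong \sfmod{\ell+m}{\Typ{\lambda} \fuse \Typ{-\lambda}} = \sfmod{\ell+m}{\sfmod{-1}{\Stag}}.
\]
Finally I would invoke the composition law for the twisting functors, $\sfmod{\ell+m}{\sfmod{-1}{\mathcal{M}}} \cong \sfmod{\ell+m-1}{\mathcal{M}}$, which holds because $\sfaut$ is an automorphism of $\Ghost$ and hence $\sfaut^{\ell+m} \sfaut^{-1} = \sfaut^{\ell+m-1}$; applying this with $\mathcal{M} = \Stag$ yields $\sfmod{\ell}{\Typ{\lambda}} \fuse \sfmod{m}{\Typ{-\lambda}} = \sfmod{\ell+m-1}{\Stag}$, as claimed.

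I do not expect any real obstacle here: the substantive work lies entirely in \thmref{thm:FusionTT} and in the (assumed) \conjref{conj:fuse}. The only thing left to verify is the innocuous composition property of spectral flow, which is immediate from the definitions of \secref{sec:Algebra}. As a remark, since each $\sfaut^{\ell}$ induces an equivalence of $\Ghost$-module categories, $\sfmod{\ell+m-1}{\Stag}$ is again an indecomposable staggered module with rank $2$ Jordan blocks, whose Loewy diagram is obtained from that of $\Stag$ in \thmref{thm:FusionTT} by applying $\sfaut^{\ell+m-1}$ to each composition factor.
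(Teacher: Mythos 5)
Your proposal is correct and is precisely the argument the paper intends: the corollary is an immediate consequence of \thmref{thm:FusionTT} combined with \conjref{conj:fuse} and the composition law $\sfaut^{\ell+m}\sfaut^{-1} = \sfaut^{\ell+m-1}$, which is why the paper offers no separate proof. Nothing is missing.
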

\noindent Because the spectrum of the theory contains staggered modules, the bosonic ghost system at $c=2$ is a \lcft{}.  We remark that fusing a typical module with its conjugate does not give the conjugate to the vacuum module, but rather a staggered module that covers the conjugate vacuum module.  The fusion rules involving the staggered modules now follow from associativity.
\begin{cor} \label{cor:FusionS}
We have the following fusion rules, assuming \conjDref{conj:rigid}{conj:fuse}:
\begin{subequations}
\begin{align}
\sfmod{\ell}{\VacMod} \fuse \sfmod{m}{\Stag} &= \sfmod{\ell+m}{\Stag}, \\
\sfmod{\ell}{\Typ{\lambda}} \fuse \sfmod{m}{\Stag} &= \sfmod{\ell+m+1}{\Typ{\lambda}} \oplus 2 \: \sfmod{\ell+m}{\Typ{\lambda}} \oplus \sfmod{\ell+m-1}{\Typ{\lambda}}, \\
\sfmod{\ell}{\Stag} \fuse \sfmod{m}{\Stag} &= \sfmod{\ell+m+1}{\Stag} \oplus 2 \: \sfmod{\ell+m}{\Stag} \oplus \sfmod{\ell+m-1}{\Stag}.
\end{align}
\end{subequations}
\end{cor}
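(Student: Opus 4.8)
The plan is to bootstrap all three fusion rules from results already in hand --- \corref{cor:Fusion}, \corref{cor:FusionTT} and \thmref{thm:FusionTT} --- using associativity and commutativity of the fusion product, the exactness hypothesis of \conjref{conj:rigid} (so that fusion distributes over direct sums of composition-type summands), and \conjref{conj:fuse} (so that overall spectral flow may be stripped off each side). The one structural input is the observation that \corref{cor:FusionTT}, taken with $\ell=1$, $m=0$, exhibits the staggered module as a fusion product of typicals,
\begin{equation} \label{eq:StagAsFusion}
\Stag \cong \sfmod{}{\Typ{\lambda}} \fuse \Typ{-\lambda} \qquad \text{for any \([\lambda] \neq [0]\).}
\end{equation}
Every product with $\Stag$ can then be rewritten, via \eqref{eq:StagAsFusion} and associativity, as an iterated fusion involving only typicals and the vacuum module, to which \corref{cor:Fusion} and \corref{cor:FusionTT} apply directly.

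For the vacuum--staggered rule it suffices, by \conjref{conj:fuse}, to compute $\VacMod \fuse \Stag$. Writing $\Stag \cong \sfmod{}{\Typ{\lambda}} \fuse \Typ{-\lambda}$, associativity gives $\VacMod \fuse \Stag \cong (\VacMod \fuse \sfmod{}{\Typ{\lambda}}) \fuse \Typ{-\lambda} \cong \sfmod{}{\Typ{\lambda}} \fuse \Typ{-\lambda} \cong \Stag$, the middle isomorphism being \eqref{FR:AT}; twisting by $\sfmod{\ell+m}{}$ finishes this case. For the typical--staggered rule I fix $\nu \notin \ZZ$ and choose the auxiliary label $\lambda$ with $[\lambda]\neq[0]$ and $\nu+\lambda \notin \ZZ$ (all but countably many $\lambda$ qualify). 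Reducing to $\Typ{\nu} \fuse \Stag$ by \conjref{conj:fuse}, expanding $\Stag$ via \eqref{eq:StagAsFusion}, and applying \eqref{FR:TT} first to $\Typ{\nu} \fuse \sfmod{}{\Typ{\lambda}}$ and then to each of the two resulting summands fused with $\Typ{-\lambda}$ (distributing over $\oplus$ by \conjref{conj:rigid}), the four typical summands collect to $\Typ{\nu}\fuse\Stag \cong \sfmod{}{\Typ{\nu}} \oplus 2\,\Typ{\nu} \oplus \sfmod{-1}{\Typ{\nu}}$; spectral flow then yields the stated rule. The staggered--staggered rule runs identically: write one factor as $\sfmod{}{\Typ{\lambda}} \fuse \Typ{-\lambda}$, fuse the other $\Stag$ with $\sfmod{}{\Typ{\lambda}}$ using the typical--staggered rule just proved (obtaining $\sfmod{2}{\Typ{\lambda}} \oplus 2\,\sfmod{}{\Typ{\lambda}} \oplus \Typ{\lambda}$), and fuse each of these three flowed typicals with $\Typ{-\lambda}$ using \corref{cor:FusionTT} to get $\sfmod{}{\Stag} \oplus 2\,\Stag \oplus \sfmod{-1}{\Stag}$.

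The calculations are routine; the only points needing a word of care are that the genericity conditions appearing in \corref{cor:Fusion} and \corref{cor:FusionTT} can indeed all be arranged by the single choice of $\lambda$ above, and that the outcome is independent of this choice --- automatic from associativity, but worth noting since the intermediate rules were established only under non-integrality hypotheses. No new obstruction arises: the genuine work (that $\Stag$ is staggered, that $J_0$ acts semisimply on the relevant fusion products, the absence of logarithmic couplings) was already carried out in proving \thmref{thm:FusionTT}, and here we merely manipulate isomorphism classes under the standing assumptions of \conjDref{conj:rigid}{conj:fuse}.
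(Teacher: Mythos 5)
Your proposal is correct and is precisely the argument the paper intends: the paper disposes of this corollary with the single remark that ``the fusion rules involving the staggered modules now follow from associativity,'' and your write-up supplies exactly the missing details --- expressing $\Stag$ as $\sfmod{}{\Typ{\lambda}} \fuse \Typ{-\lambda}$ via \corref{cor:FusionTT} and reassociating so that only the rules of \corref{cor:Fusion} and \corref{cor:FusionTT} are needed. The care you take with the genericity condition $\nu + \lambda \notin \ZZ$ and the independence of the auxiliary $\lambda$ is a welcome addition, but the route is the same.
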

\noindent We remark that there are many other indecomposables whose fusion rules have not been determined, the atypical standards $\Typ{0}^{\pm}$ and the length $3$ subquotients of $\Stag$, for example.  We expect that computing these fusion products iteratively will fill out a complete set of indecomposables for the $c=2$ ghost theory, much as one finds in the case of $\AKMSA{gl}{1}{1}$ \cite{GotRep07}.  As the results determined above seem to suggest that the typical modules $\sfmod{\ell}{\Typ{\lambda}}$ and staggered modules $\sfmod{\ell}{\Stag}$ form an ideal in the fusion ring, we make the following conjecture:
\begin{conj} \label{conj:Proj}
Let $\categ{C}$ be the abelian category of ghost \voa{} modules generated, by imposing closure under extensions, from the typicals $\sfmod{\ell}{\Typ{\lambda}}$ and the simple atypicals $\sfmod{\ell}{\VacMod}$ (we still insist that $\wun \in \Ghost$ act as the identity on these extensions).  Then, in $\categ{C}$, the typical module $\sfmod{\ell}{\Typ{\lambda}}$ is simple and projective, whereas the staggered module $\sfmod{\ell}{\Stag}$ is the projective cover of the simple atypical module $\sfmod{\ell}{\VacMod}$.
\end{conj}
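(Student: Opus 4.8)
Since spectral flow is an autoequivalence of $\categ{C}$ and conjugation is an (anti)equivalence, it suffices to treat the modules $\Typ{\lambda}$, $\Stag$ and $\VacMod$ with spectral flow index $0$. Simplicity of $\Typ{\lambda}$ for $[\lambda] \neq [0]$ is \propref{prop:FinGhReps} together with the discussion following \eqref{es:Atyp}, so what must be established is: (i) $\Typ{\lambda}$ is projective in $\categ{C}$; (ii) $\Stag$ is projective in $\categ{C}$; and (iii) $\Stag$ is the projective cover of $\VacMod$. Note that, naturally read as the smallest full subcategory closed under extensions, subobjects and quotients, $\categ{C}$ is a finite-length category whose simple objects are exactly the $\sfmod{m}{\Typ{\mu}}$ with $[\mu] \neq [0]$ and the $\sfmod{m}{\VacMod}$ (recall $\conjmod{\VacMod} \cong \sfmod{-1}{\VacMod}$), and that $\Stag \in \categ{C}$ since it is an iterated extension of the $\sfmod{k}{\VacMod}$.

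For (i) the plan is to show $\Extgrp{1}{\Typ{\lambda}}{M} = 0$ for all $M \in \categ{C}$; since $\categ{C}$ is finite-length and extension-generated, a dévissage along short exact sequences reduces this to $M$ simple, that is, to the vanishing of $\Extgrp{1}{\Typ{\lambda}}{\sfmod{m}{\Typ{\mu}}}$ and $\Extgrp{1}{\Typ{\lambda}}{\sfmod{m}{\VacMod}}$. Most of these will follow from a weight argument: the $J_0$-eigenvalues of $\Typ{\lambda}$ lie in $\ZZ + \lambda$ with $\lambda \notin \ZZ$, whereas those of $\sfmod{m}{\VacMod}$, and of $\sfmod{m}{\Typ{\mu}}$ whenever $[\mu] \neq [\lambda]$, lie in a different coset of $\ZZ$; decomposing any extension into the submodules spanned by its generalised $J_0$-eigenspaces over each coset then forces it to split. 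This leaves the resonant cases $\Extgrp{1}{\Typ{\lambda}}{\sfmod{m}{\Typ{\mu}}}$ with $[\mu] = [\lambda]$. For $m = 0$ these should reduce, via the cyclicity argument used to prove semisimplicity of $\categ{O}$ in \secref{sec:Reps} applied to the parabolic highest weight vectors, to the vanishing of $\Extgrp{1}{\FinTyp{\lambda}}{\FinTyp{\lambda}}$, which holds because the $[\lambda] \neq [0]$ block of weight $\FinGhost$-modules is semisimple. For $m \neq 0$ a comparison of conformal weights modulo $\ZZ$ splits the extension unless $m\lambda \in \ZZ$, and for those remaining (rational) values of $\lambda$ one would have to eliminate the extension by hand, by an explicit analysis of the mode action in the spirit of \secref{sec:Fusion}.

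For (ii) and (iii) the cleanest route will exploit fusion. By \thmref{thm:FusionTT} one has $\Typ{\lambda} \fuse \Typ{-\lambda} = \sfmod{-1}{\Stag}$, so it suffices to show this fusion product is projective. Granting \conjref{conj:rigid}, which makes $\fuse$ exact, together with the expected rigidity of the ghost fusion product (with $\conjmod{\Typ{\lambda}} = \Typ{-\lambda}$ the dual of $\Typ{\lambda}$) and the closure of $\categ{C}$ under fusion, one obtains the adjunction $\Extgrp{1}{\Typ{\lambda} \fuse \Typ{-\lambda}}{-} \cong \Extgrp{1}{\Typ{\lambda}}{\Typ{-\lambda} \fuse -}$ on $\categ{C}$, so the projectivity of $\Typ{\lambda}$ from (i) transfers to $\Typ{\lambda} \fuse \Typ{-\lambda}$, hence to $\Stag$. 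Since \thmref{thm:FusionTT} also exhibits $\Stag$ as indecomposable with simple head $\VacMod$, and an indecomposable projective with simple head is the projective cover of that head, (iii) follows. The typical module $\Typ{\lambda}$, being simple and projective, is trivially its own projective cover.

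The main obstacle will be the resonant $m \neq 0$ case of step (i): there spectral flow places $\Typ{\lambda}$ and $\sfmod{m}{\Typ{\mu}}$ on a common $J_0$- and (for $m\lambda \in \ZZ$) $L_0$-coset, so the quick weight argument collapses, and since the ghost \voa{} is neither rational nor $C_2$-cofinite there is no Zhu-algebra reduction; one would be left to run the \NGK{} algorithm systematically on the relevant resolutions or to construct a bespoke deformation argument. A secondary issue is that steps (ii) and (iii), as sketched, rest on the still-open \conjref{conj:rigid} and on strengthening it to full rigidity; an unconditional treatment would instead have to prove $\Stag$ projective directly, by a homological argument built from the exact sequences of \thmref{thm:FusionTT} and \eqref{es:Atyp}.
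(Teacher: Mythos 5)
The first thing to note is that the statement you are proving is \conjref{conj:Proj}: the paper offers no proof of it at all. The only evidence given there is structural --- the fusion rules of \corref{cor:FusionS} show that the $\sfmod{\ell}{\Typ{\lambda}}$ and $\sfmod{\ell}{\Stag}$ form an ideal in the fusion ring, which is the behaviour expected of projectives, and the authors explicitly defer the rigidity of $\categ{C}$ (which your argument needs) to future work. So there is no proof in the paper to compare yours against; the only question is whether your sketch closes the conjecture. It does not, and you say so yourself, but it is worth being precise about why the two gaps you flag are substantive rather than cosmetic.

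For (i), the coset d\'evissage and the $m=0$ resonant case are fine in outline: the reduction to $\Extgrp{1}{\FinTyp{\lambda}}{\FinTyp{\lambda}} = 0$ works by the same mechanism as the semisimplicity argument for $\categ{O}$ in \secref{sec:Reps}, since a $\FinGhost$-complement in the minimal conformal weight space generates a complementary copy of the simple parabolic Verma module. But the case $\Extgrp{1}{\Typ{\lambda}}{\sfmod{m}{\Typ{\lambda}}}$ with $m \neq 0$ and $m\lambda \in \ZZ$ is not a loose end one can wave at: \secref{sec:Fusion} shows that fusion genuinely produces indecomposable gluings across distinct spectral flow indices (that is precisely what $\Stag$ is), so there is no a priori reason such extensions vanish for typicals, and nothing in the paper addresses them. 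For (ii)--(iii), the adjunction $\Extgrp{1}{\Typ{\lambda} \fuse \Typ{-\lambda}}{-} \cong \Extgrp{1}{\Typ{\lambda}}{\Typ{-\lambda} \fuse -}$ requires exactness of fusion \emph{and} rigidity with $\conjmod{\Typ{\lambda}} = \Typ{-\lambda}$ as dual; the former is the unproven \conjref{conj:rigid} and the latter is stated in the paper only as ``very likely''. Your argument therefore derives the conjecture from strictly stronger unproven statements. What you have written is a sensible reduction of \conjref{conj:Proj} to (a) the resonant Ext computation and (b) rigidity --- which is arguably more than the paper itself provides --- but it should be presented as a conditional reduction, not as a proof.
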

\noindent The category $\categ{C}$ of ghost \voa{} modules is then closed under fusion and conjugation.  Moreover, we will see shortly that one can construct modular invariant partition functions from the characters of its modules.  We therefore think of $\categ{C}$ as being the physically relevant module category for bosonic ghost (logarithmic) \cfts{}.  It seems very likely to us that this category is rigid, so that, for example, fusing with any given module defines an exact functor from $\categ{C}$ to itself.  Fusion would then define a well-defined product of the Grothendieck group, proving half of \conjref{conj:rigid}.  We hope to return to this question of rigidity in the future.

Finally, we remark that in order to explicitly observe the Jordan block for $L_0$ using the \NGK{} algorithm, one would have to construct a quotient in which $w \neq 0$.  This would require excluding all powers of $\beta_0$ and $\gamma_{-1}$ from the subalgebra by whose action we quotient; the largest such subalgebra is that generated by the $\beta_n$ and $\gamma_{n-1}$ with $n \le -1$.  Unfortunately, the quotient of $\Typ{\lambda} \fuse \Typ{-\lambda}$ by the action of this subalgebra has infinite-dimensional subspaces of constant charge.  Thus, linear algebra would not suffice to determine the existence of the Jordan block, leading one instead into the world of abstract analysis.  We will also leave this technical endeavour for the future.

\section{Modular Invariants} \label{sec:ModInv}

Since the S-transformation is symmetric and unitary in the standard basis (\secref{sec:Mod}), the diagonal partition function
\begin{equation}
\func{Z_{\text{diag.}}}{y;z;q} = \sum_{\ell \in \ZZ} \int_{\RR / \ZZ} \abs{\fch{\sfmod{\ell}{\Typ{\lambda}}}{y;z;q}}^2 \: \dd \lambda
\end{equation}
is (formally) modular invariant.  Here, it is important to augment the characters by the additional variable $y$ as in the discussion surrounding \propref{prop:ModAct}.  According to the proposals of \cite{QueFre07,CreMod13}, the corresponding bulk state space should have the form
\begin{equation}
\mathbf{H} = \mathbf{B} \oplus \bigoplus_{\ell \in \ZZ} \directint_{\RR / \ZZ} \sfmod{\ell}{\Typ{\lambda}} \otimes \sfmod{\ell}{\Typ{\lambda}} \: \dd \lambda,
\end{equation}
where $\mathbf{B}$ is an indecomposable atypical bulk module whose structure is described by the following (partial) Loewy diagram in which the solid and dotted arrows represent the action of the two copies of $\Ghost$:
\begin{center}
\scalebox{0.73}{
\begin{tikzpicture}[thick,>=latex,scale=1.67,
                    nom/.style={circle,draw=black!20,fill=black!20,inner sep=3pt}
                    ]
\node at (-6,0) {$\cdots$};
\node at (0,0) [nom] {\scalebox{1.33}{$\mathbf{B}$}};
\node at (6,0) {$\cdots$};
\node (t1) at (-6,1) {$\sfaut^{-2} \otimes \sfaut^{-2}$};
\node (t2) at (-3,1) {$\sfaut^{-1} \otimes \sfaut^{-1}$};
\node (t3) at (0,1) {$\wun \otimes \wun$};
\node (t4) at (3,1) {$\sfaut \otimes \sfaut$};
\node (t5) at (6,1) {$\sfaut^2 \otimes \sfaut^2$};
\node (m1) at (-5,0) {$\sfaut^{-2} \otimes \sfaut^{-1}$};
\node (m2) at (-4,0) {$\sfaut^{-1} \otimes \sfaut^{-2}$};
\node (m3) at (-2,0) {$\sfaut^{-1} \otimes \wun$};
\node (m4) at (-1,0) {$\wun \otimes \sfaut^{-1}$};
\node (m5) at (1,0) {$\wun \otimes \sfaut$};
\node (m6) at (2,0) {$\sfaut \otimes \wun$};
\node (m7) at (4,0) {$\sfaut \otimes \sfaut^2$};
\node (m8) at (5,0) {$\sfaut^2 \otimes \sfaut$};
\node (b1) at (-6,-1) {$\sfaut^{-2} \otimes \sfaut^{-2}$};
\node (b2) at (-3,-1) {$\sfaut^{-1} \otimes \sfaut^{-1}$};
\node (b3) at (0,-1) {$\wun \otimes \wun$};
\node (b4) at (3,-1) {$\sfaut \otimes \sfaut$};
\node (b5) at (6,-1) {$\sfaut^2 \otimes \sfaut^2$};
\draw[->] (t1) -- (m1);
\draw[->] (t2) -- (m2);
\draw[->] (t2) -- (m3);
\draw[->] (t3) -- (m4);
\draw[->] (t3) -- (m5);
\draw[->] (t4) -- (m6);
\draw[->] (t4) -- (m7);
\draw[->] (t5) -- (m8);
\draw[->] (m1) -- (b1);
\draw[->] (m2) -- (b2);
\draw[->] (m3) -- (b2);
\draw[->] (m4) -- (b3);
\draw[->] (m5) -- (b3);
\draw[->] (m6) -- (b4);
\draw[->] (m7) -- (b4);
\draw[->] (m8) -- (b5);
\draw[->,dotted] (t1) -- (m2);
\draw[->,dotted] (t2) -- (m1);
\draw[->,dotted] (t2) -- (m4);
\draw[->,dotted] (t3) -- (m3);
\draw[->,dotted] (t3) -- (m6);
\draw[->,dotted] (t4) -- (m5);
\draw[->,dotted] (t4) -- (m8);
\draw[->,dotted] (t5) -- (m7);
\draw[->,dotted] (m2) -- (b1);
\draw[->,dotted] (m1) -- (b2);
\draw[->,dotted] (m4) -- (b2);
\draw[->,dotted] (m3) -- (b3);
\draw[->,dotted] (m6) -- (b3);
\draw[->,dotted] (m5) -- (b4);
\draw[->,dotted] (m8) -- (b4);
\draw[->,dotted] (m7) -- (b5);
\end{tikzpicture}
}
\end{center}
Here, we represent the bulk composition factor $\sfmod{\ell}{\VacMod} \otimes \sfmod{m}{\VacMod}$ by the automorphism $\sfaut^{\ell} \otimes \sfaut^m$ for brevity.  The character
\begin{equation}
\ch{\mathbf{B}} = \sum_{\ell \in \ZZ} \ch{\sfmod{\ell}{\VacMod}}^* \ch{\sfmod{\ell}{\Stag}} = \sum_{\ell \in \ZZ} \ch{\sfmod{\ell}{\Stag}}^* \ch{\sfmod{\ell}{\VacMod}}
\end{equation}
underscores the similarity between this proposal and the standard decompositions of the regular representations of finite-dimensional associative algebras and compact Lie groups.  We note that the nilpotent part of the actions of $L_0$ and $\ahol{L}_0$ both map each vector associated with the head of this module (the top composition factors) to the same vector in its socle (the bottom composition factors).  Locality, meaning the single-valuedness of bulk correlators, is thus satisfied for this proposed bulk module structure \cite{GabLoc99}.

Note that the charge conjugate partition function is likewise formally modular invariant, but the corresponding atypical bulk module does not have a submodule isomorphic to $\VacMod \otimes \VacMod$ because $\VacMod \ncong \conjmod{\VacMod}$.  In particular, the charge conjugate bulk state space would possess no vacuum state, so its physical consistency is not clear to us.  It would be interesting to know whether these bulk state space proposals may be interpreted in terms of coends as advocated in \cite{FucFro13}.

There are nevertheless many other modular invariants of simple current type.  Indeed, the fusion rules \eqref{FR:AA} show that each of the $\sfmod{p}{\VacMod}$ is a simple current of infinite order.  The vacuum module $\ExtVacMod$ of the corresponding simple current extension $\ExtGhost{p}$ (we take $p>0$ without loss of generality) then decomposes as
\begin{equation}
\ExtVacMod \cong \bigoplus_{r \in \ZZ} \sfmod{rp}{\VacMod},
\end{equation}
when restricted to a $\Ghost$-module.  It is easy to check that the charges and conformal weights of the vectors in $\ExtVacMod$ are integers and that this continues to hold for all atypical indecomposables.  The same is not true for the typical extended algebra modules.  The module
\begin{equation}
\ExtTyp{\lambda} \cong \bigoplus_{r \in \ZZ} \sfmod{rp}{\Typ{\lambda}}
\end{equation}
turns out to be untwisted, meaning that the extended algebra fields have trivial monodromy, if and only if $p \lambda \in \ZZ$.

When $p$ is even, the characters of the (untwisted) standard extended algebra modules $\sfmod{\ell}{\ExtTyp{j/p}}$, for $j, \ell = 0, 1, \ldots, p-1$, span a finite-dimensional representation of the modular group.\footnote{When $p$ is odd, these untwisted characters are transformed by S into a linear combination of twisted characters.  We expect that in this case the extended algebra $\ExtGhost{p}$ is fermionic in nature.}  In particular,
\begin{equation}
\ch{\sfmod{\ell}{\ExtTyp{j/p}}} \overset{\modS}{\longmapsto} \frac{1}{p} \sum_{m,k=0}^{p-1} \brac{-1}^{\ell+m} \ee^{-2 \pi \ii \brac{\ell k + m j} / p} \ch{\sfmod{m}{\ExtTyp{k/p}}}
\end{equation}
and the extended S-matrices are easily checked to be symmetric and unitary.  The partition function
\begin{equation}
Z_p = \sum_{\ell,j=0}^{p-1} \abs{\ch{\sfmod{\ell}{\ExtTyp{j/p}}}}^2 = \sum_{\ell,r \in \ZZ} \sum_{j=0}^{p-1} \ch{\sfmod{\ell}{\Typ{j/p}}}^* \ch{\sfmod{\ell+rp}{\Typ{j/p}}}
\end{equation}
is therefore modular invariant for $p$ even.  We remark that the corresponding theories are always logarithmic.

Finally, we mention that although these simple current extensions define formal modular invariants, their modular properties are unsatisfactory in general.  In particular, it is not clear how to evaluate the S-transforms of the simple atypical characters --- the obvious manipulations lead to a divergence due to the pole in \eqnref{eq:SMatrixAtyp}.  It would be interesting to understand this because the standard examples of $C_2$-cofinite logarithmic theories, whose modular properties are similarly unsatisfactory, may likewise be realised as simple current extensions \cite{RidMod13}.  It would also be interesting to classify all ghost modular invariants; we hope to return to these questions in the future.

\section*{Acknowledgements}

We thank J\"{u}rgen Fuchs and Christoph Schweigert for illuminating discussions regarding parabolic Verma modules and the organisers of the Erwin Schr\"{o}dinger Institute programme ``Modern trends in topological quantum field theory'' for their hospitality.
DR's research is supported by the Australian Research Council Discovery Project DP1093910.  
SW's work is supported by the Australian Research Council Discovery Early Career Researcher Award DE140101825.

\appendix

\section{Fusion for the $c=-1$ Bosonic Ghost System} \label{app:c=-1Fusion}

In this appendix, we quickly recall the fusion rules of the bosonic ghosts with $a=0$, hence central charge $c=-1$.  These were partially reported in \cite[Sec.~5]{RidFus10} as consequences of the $\AKMA{sl}{2 ; \RR}_{-1/2}$ fusion rules computed there, see also \cite[Sec.2.3]{CreMod12}.  We include them here for comparison with the $c=2$ results given in \corref{cor:Fusion}, \thmref{thm:FusionTT} and \corref{cor:FusionS}.  The equivalence of the results then supports our assertion that the tensor structure on the category of ghost modules is independent of the central charge.\footnote{We note that to check this completely, we would have to compute the fusion rules of the other indecomposables, for example those of the length $3$ subquotients of the $\sfmod{\ell}{\Stag}$ and the ``zigzag''-type indecomposables (see \cite{GotRep07} for example) that we expect to result from fusing these.}

As noted in \secref{sec:Algebra}, the different ghost systems only differ in the choice of conformal structure, so their module categories are equivalent (as abelian categories).  However, there is one important difference:  Because $h_{\beta}^0 = h_{\gamma}^0 = \tfrac{1}{2}$ when $a=0$, one should also consider spectral flow twists $\sfaut^{\ell}$ where $\ell$ is a half-integer.\footnote{In particular, the spectral flow automorphism $\gamma$ used in \cite{RidFus10} must be identified with $\sfaut^{1/2}$ here.}  This translates into integer spectral flow twists for the $\ZZ_2$-orbifold $\AKMA{sl}{2 ; \RR}_{-1/2}$.  The results of \cite{RidFus10} assumed \conjref{conj:fuse} and may be stated in the following form:
\begin{subequations}
\begin{gather}
\sfmod{\ell}{\OthVac} \othfuse \sfmod{m}{\OthVac} = \sfmod{\ell + m}{\OthVac}, \quad 
\sfmod{\ell}{\OthVac} \othfuse \sfmod{m}{\OthTyp{\lambda}} = \sfmod{\ell + m}{\OthTyp{\lambda}}, \quad 
\sfmod{\ell}{\OthVac} \othfuse \sfmod{m}{\OthStag} = \sfmod{\ell + m}{\OthStag}, \\
\sfmod{\ell}{\OthTyp{\lambda}} \othfuse \sfmod{m}{\OthTyp{\mu}} = 
\begin{cases}
\sfmod{\ell + m}{\OthStag} & \text{if \(\lambda + \mu \in \ZZ\),} \\
\sfmod{\ell + m + 1/2}{\OthTyp{\lambda + \mu + 1/2}} \oplus \sfmod{\ell + m - 1/2}{\OthTyp{\lambda + \mu - 1/2}} & \text{otherwise,}
\end{cases}
\\
\sfmod{\ell}{\OthTyp{\lambda}} \othfuse \sfmod{m}{\OthStag} = \sfmod{\ell + m + 1}{\OthTyp{\lambda}} \oplus 2 \: \sfmod{\ell + m}{\OthTyp{\lambda}} \oplus \sfmod{\ell + m - 1}{\OthTyp{\lambda}}, \\
\sfmod{\ell}{\OthStag} \othfuse \sfmod{m}{\OthStag} = \sfmod{\ell + m + 1}{\OthStag} \oplus 2 \: \sfmod{\ell + m}{\OthStag} \oplus \sfmod{\ell + m - 1}{\OthStag}.
\end{gather}
\end{subequations}
Here, $\othfuse$ denotes the fusion product of the $c=-1$ theory, $\OthVac$ denotes the vacuum module, the $\OthTyp{\lambda}$ constitute a family of parabolic \hwms{} parametrised by $[\lambda] \in \RR / \ZZ$ whose elements are simple if $[\lambda] \neq \tfrac{1}{2}$, and $\OthStag$ denotes a staggered module whose Loewy diagram
\begin{equation}
\parbox{0.3\textwidth}{
\begin{tikzpicture}[thick,>=latex,
                    nom/.style={circle,draw=black!20,fill=black!20,inner sep=1pt}
                    ]
\node (top) at (0,1.5) [] {\(\OthVac\)};
\node (left) at (-1.5,0) [] {\(\sfmod{-1}{\OthVac}\)};
\node (right) at (1.5,0) [] {\(\sfmod{}{\OthVac}\)};
\node (bot) at (0,-1.5) [] {\(\OthVac\)};
\node at (0,0) [nom] {\(\OthStag\)};
\draw[->] (top) -- (left);
\draw[->] (top) -- (right);
\draw[->] (left) -- (bot);
\draw[->] (right) -- (bot);
\end{tikzpicture}
}
\end{equation}
fixes its structure up to isomorphism.

The equivalence between these results and those that we have derived for $c=2$ is given by the identifications
\begin{equation}
\OthVac \longleftrightarrow \VacMod, \qquad 
\sfmod{-1/2}{\OthTyp{\lambda + 1/2}} \longleftrightarrow \Typ{\lambda}, \qquad 
\OthStag \longleftrightarrow \Stag.
\end{equation}
The twist by $\sfaut^{-1/2}$ for the standard modules should not be surprising:  \eqnref{eq:TDef} implies that conformal weights at $c=-1$ ($a=0$) and $c=2$ ($a=-\tfrac{1}{2}$) are related by
\begin{equation}
L_0^0 = L_0^{-1/2} - \frac{1}{2} J_0.
\end{equation}
Thus, the parabolic \hwvs{} of $\Typ{\lambda}$, which all have conformal weight $0$, will no longer have constant conformal weight upon changing the conformal structure.  The shift of $\lambda$ by $\tfrac{1}{2}$ likewise accounts for the fact that the atypical point is $[\lambda] = [\tfrac{1}{2}]$ for $c=-1$, rather than $[\lambda] = [0]$.  We view this identification as providing strong evidence for the equivalence of the $c=-1$ and $c=2$ tensor categories.  In fact, we believe that this equivalence should hold much more generally because fusion, at least in the \NGK{} formalism, seems to be depend only upon the translation operator $L_{-1}$, which is independent of the choice of conformal structure because of \eqref{eq:TDef} and $\brac{\pd J}_{-1} = 0$.

\flushleft

\end{document}